\documentclass[11pt]{article}  

\usepackage{graphicx}

\usepackage{amssymb}
\usepackage{latexsym}
\usepackage[english]{babel}
\usepackage{mathtools}
\usepackage{makeidx}
\usepackage[utf8]{inputenc}
\usepackage{verbatim}
\usepackage{cite}
\usepackage[pagewise]{lineno}%\linenumbers
%This is The Theorem Style Package
\usepackage{amsthm}
\usepackage{pgfplots}
\pgfplotsset{compat=1.14}
\usepackage{tikz-cd} 
\usepackage[hyperindex,breaklinks]{hyperref}

% THEOREM Environments ---------------------------------------------------
\newtheorem{theorem}{Theorem}[section]

\newtheorem{lemma}[theorem]{Lemma}
\newtheorem{proposition}[theorem]{Proposition}

\theoremstyle{definition}
\newtheorem{definition}[theorem]{Definition}
\newtheorem{remark}[theorem]{Remark}
\newtheorem{example}{Example}

% Command Shortlist --------------------------------------------------------
\newcommand{\R}{\ensuremath{\mathbb{R}}}
\newcommand{\N}{\ensuremath{\mathbb{N}}}

\newcommand{\C}{\ensuremath{\mathcal{C}}}

\newcommand{\M}{\ensuremath{\mathcal{M}}}

\newcommand{\F}{\ensuremath{\mathbb{F}}}

\DeclarePairedDelimiter{\set}{\{}{\}}

\newcommand{\Reeb}{\mathcal{R}}
\newcommand{\dd}{\mathrm{d}}
\newcommand{\dv}[2]{\frac{\dd #1}{\dd #2}}
\newcommand{\pdv}[2]{\frac{\partial #1}{\partial #2}}
\newcommand*{\contr}[1]{i_{#1}}

\newcommand*{\liedv}[1]{\mathcal{L}_{#1}}

\begin{document}
	\title{On the geometry of discrete contact mechanics}
	\author{
		{\bf\large Alexandre Anahory Simoes}\hspace{2mm}
		\vspace{1mm}\\
		{\it\small Instituto de Ciencias Matematicas }\\
		{\it\small Calle Nicolas Cabrera, 13-15, Campus Cantoblanco, UAM}, {\it\small 28049 Madrid, Spain}\\
		\vspace{2mm}\\
		{\bf\large David Martín de Diego}\hspace{2mm}
		\vspace{1mm}\\
		{\it\small Instituto de Ciencias Matematicas (CSIC-UAM-UC3M-UCM) }\\
		{\it\small Calle Nicolas Cabrera, 13-15, Campus Cantoblanco, UAM}, {\it\small 28049 Madrid, Spain}\\
		\vspace{2mm}\\
		{\bf\large Manuel Lainz Valcázar}\hspace{2mm}
		\vspace{1mm}\\
		{\it\small Instituto de Ciencias Matematicas }\\
		{\it\small Calle Nicolas Cabrera, 13-15, Campus Cantoblanco, UAM}, {\it\small 28049 Madrid, Spain}\\
		\vspace{2mm}\\
		{\bf\large Manuel de León}\hspace{2mm}
		\vspace{1mm}\\
		{\it\small Instituto de Ciencias Matematicas and Real Academia Española de Ciencias}\\
		{\it\small Calle Nicolas Cabrera, 13-15, Campus Cantoblanco, UAM}, {\it\small 28049 Madrid, Spain}\\
	}

\date{\today}

\maketitle

%\vspace{0.5cm}
\begin{abstract}
In this paper, we continue the construction of variational integrators adapted to contact geometry started in \cite{VBS}, in particular, we introduce a discrete Herglotz Principle and the corresponding discrete Herglotz Equations for a discrete Lagrangian in the contact setting. This allows us to develop convenient numerical integrators for contact Lagrangian systems that are conformally contact by construction. The existence of an exact Lagrangian function is also discussed. 
\end{abstract}

%%%%%%%%%%%%%%%%%%%%%%%%%%%%%%%%%%%%%%%%%%%%%%
\let\thefootnote\relax\footnote{\noindent AMS {\it Mathematics Subject Classification (2010)}. Primary 34C25; Secondary  37E40,
	70K40.\\
	\noindent Keywords. Contact geometry, exact discrete Lagrangian, variational integrator, dissipation
	}
%%%%%%%%%%%%%%%%%%%%%%%%%%%%%%%%%%%

%%%%%%%%%%%%%%%%%%%%%%%%%%%%%%%%%%%%%%%%%%%%%%%%%%%%%%%%%%%%%%%%%%%%%%%%%%%%%%%%%%%%%%%%%%%%%%%%%%%%%%%%%%%%%%%%%%%%%%%%%%%%%%%%%%%%%%%%%%%%%%%%%%%%%%%%%%%%%%%%%%%%%%%%%%%%%%%%%%%%%%%%%%%
\section{Introduction}

Contact Hamiltonian and Lagrangian systems have deserved a lot of attention in recent years \cite{Bravetti2017,Bravetti2018} or \cite{deLeon2019}.
One of the most relevant features of contact dynamics is the absence of conservative properties
contrarily to the conservative character of the energy in symplectic dynamics; indeed, 
we have a dissipative behavior. This fact suggests that contact geometry may be the appropriate framework to model many physical and mathematical problems with dissipation we find in thermodynamics, statistical physics, quantum mechanics, gravity or control theory, among many others. Consequently, it becomes an important necessity to develop numerical methods adapted to the contact setting for applications in the above mentioned subjects.
The idea is to develop geometric integrators, that is,  numerical methods for differential equations which preserve geometric properties like contact structure, symmetries, configuration space...  This preservation of structural properties is often desirable to achieve correct qualitative behavior and long time stability \cite{hairer,serna,blanes}.

As far as we know, the first attempt to develop geometric integrators for the contact case is in the paper \cite{VBS} (see also \cite{BSVZ}), 
where the authors present geometric numerical integrators for contact flows that stem from a discretization of Herglotz variational principle.

Our goal in the current paper is to go further in the discrete description of contact dynamics,
so we will mention some of the new and relevant results that the reader can find in the next pages.
Instead of deriving the discrete Herglotz equations by an heuristic argument, they are directly
obtained from a clear discrete variational principle. In addition, to develop the discrete algorithm
we use the natural discretization $Q \times Q \times \mathbb{R}$, which preserves all the contact geometry flavor.

Another relevant point is the discussion of the existence of an exact discrete Lagrangian function \cite{marsden-west,PatrickCuell}, which will lead us to define the contact exponential map and prove its existence. This construction is essential to develop a complete theory of variational error analysis for contact Lagrangian systems. 

Finally, we consider a discrete version of the infinitesimal symmetries discussed in \cite{Gaset2019}, jointly
with the corresponding dissipated quantities.

The paper is structured as follows. Section 2 is devoted to a quick review of contact Hamiltonian and Lagrangian systems in the continuous setting.
In particular, we recall the Herglotz variational principle, since it will be the motivation to develop the corresponding discrete version. 
Section 3 is devoted to construct the discrete version of contact Lagrangian dynamics for a discrete Lagrangian $L_d : Q \times Q \times \mathbb{R} \to \mathbb{R}$, where $Q$ is the configuration manifold. We consider the discrete Herglotz principle to obtain the so-called discrete Herglotz equations.
The Legendre transformations $F^{-}L_d$ and  $F^{+}L_d$ are defined, and consequently the
discrete flow (at the Lagrangian and Hamiltonian levels); the main result is that the discrete flow is a conformal
contactomorphism. In Section 4 we define the contact exponential map for the Herglotz vector field and prove that it is a local diffeomorphism.
This result permits to study the existence of an exact Lagrangian function.
Finally, we consider several examples to illustrate our theoretical developments.

\section{Continuous contact mechanics}

\subsection{Contact manifolds and Hamiltonian systems}
In this section we will recall the main definitions and results on the theory of contact manifolds and Hamiltonian system. See~\cite{deLeon2018} for a more detailed overview.

A \emph{contact manifold} $(M,\eta)$ is an $(2n+1)$-dimensional manifold with a \emph{contact form} $\eta$ \cite{marle}. That is, $\eta$ is a $1$-form on $M$ such that $\eta \wedge \dd \eta^n$ is a volume form. This type of  manifolds have a distinguished vector field: the so-called Reeb vector field $\Reeb$, which is the unique vector field that satisfies:
\begin{equation}
	\contr{\Reeb} \dd \eta = 0, \quad \eta(\Reeb)=1.
\end{equation}

On a contact manifold $(M,\eta)$, we  define the following isomorphism of vector bundles:
\begin{equation}
	\begin{aligned}
		\flat: \quad&TM &\longrightarrow& T^* M,\\
	  	        & \quad v &\longmapsto& \contr{v}\dd \eta + \eta(v)\eta.
	\end{aligned}
\end{equation}
Notice that $\flat(\Reeb) = \eta$.

There is a Darboux theorem for contact manifolds. In a neighborhood of each point in $M$ one can find local coordinates $(q^i, p_i, z)$ such that
\begin{equation}
	 \eta = \dd z - p_i  \dd q^i.
\end{equation}
In these coordinates, we have
\begin{equation}
	\Reeb = \frac{\partial}{\partial z}.
\end{equation}

An example of a contact manifold is $T^*Q\times \R$. Here, the contact form is given by 
\begin{equation}\label{eq:cotangent_contact_structure}
	\eta_Q = \dd z - \theta_Q = \dd z - p_i \dd q^i,
\end{equation}
where $\theta_Q$ is pullback the tautological $1$-form of $T^*Q$, $(q^i,p_i)$ are natural coordinates on $T^*Q$ and $z$ is the $\R$-coordinate.

We say that a (local) diffeomorphism between two contact manifolds $F:(M,\eta) \to (N,\tau)$ is a (local) \emph{contactomorphism} if $F^* \tau = \eta$. We say that $F$ is a (local) \emph{conformal contactomorphism} if $F^* \ker \tau = \ker \eta$ or, equivalently, $F^* \tau = \sigma \eta$, where $\sigma: M \to \R \setminus \set{0}$ is the \emph{conformal factor}.

We say that a vector field $X$ on $M$ is an \emph{infinitesimal (conformal) contactomorphism} if its flow $F_t$ consists of (conformal) contactomorphisms.

From the general identify, where $F_t$ is a flow and $X$ is its infinitesimal generator
\begin{equation}
	\pdv{}{t} F_t^* \eta  =
	 F_t^* \liedv{X} \eta,
\end{equation}
we deduce that $X$ is infinitesimal contactomorphism if and only if
\begin{equation}
	\liedv{X} \eta = 0.
\end{equation}
Furthermore, $X$ is a conformal contactomorphism if and only if 
\begin{equation}
	\liedv{X} \eta = a \eta,
\end{equation}
for some $a:M\to\R$. The function $a$ is related to the conformal factors $\sigma_t$ of the conformal contactomorphisms $F_t$ by
\begin{equation}
	\sigma_t (x)= \int_0^t \exp(a(F_\tau(x))) \dd \tau.
\end{equation}

Given a smooth function $f:M\to\R$, its \emph{Hamiltonian vector field} $X_f$ is given by
\begin{equation}
	\flat(X_f) =  \dd f - (f + \Reeb(f)) \eta.
\end{equation}

A vector field $X$ is the Hamiltonian vector field of some function $f$ if and only if it is an infinitesimal conformal contactomorphism. In that case $X=X_f$ for $f = - \eta(X)$. Moreover, $\liedv{X} \eta  = - \Reeb(f) \eta$. Hence $X$ is an infinitesimal contactomorphism if and only if $X = X_f$ for some function $f$ such that $\Reeb(f) = 0$.

We call the triple $(M, \eta, H)$ a \emph{contact Hamiltonian system}, where $(M,\eta)$ is a contact manifold and $H:M \to \R$ is the \emph{Hamiltonian function}.

In contrast to their symplectic counterpart, contact Hamiltonian vector fields do not preserve the Hamiltonian. In fact
\begin{equation}
	X_H(H) = -\Reeb(H) H.
\end{equation}

\subsection{Contact Lagrangian systems}\label{contact:lagrangian}
Now we review the Lagrangian picture of contact systems. In~\cite{deLeon2019} we give a more comprehensive description which also covers the case of singular Lagrangians.

Let $Q$ be an $n$-dimensional \emph{configuration manifold} and consider the \emph{extended phase space} $TQ \times \R$ and a \emph{Lagrangian function}  
 $L:TQ\times \R \to \R$. In this paper, we will assume that the Lagrangian is regular, that is, the Hessian matrix with respect to the velocities $(W_{ij})$ is regular where
 \begin{equation}\label{eq:hessian}
     W_{ij} = \frac{\partial^2 L}{\partial \dot{q}^i \partial \dot{q}^j },
 \end{equation}
 and $(q^i, \dot{q}^i,z)$ are bundle coordinates for $TQ \times \R$. Equivalently, $L$ is regular if and only if the one-form
\begin{equation}
    \eta_L = \dd z - \theta_L
\end{equation}
is a contact form. Here,
\begin{align}
    \theta_L &= S^* (\dd L) = \pdv{L}{\dot q^i} \dd {q}^i,
\end{align}
 where $S$ is the canonical vertical endomorphism on $TQ$ extended to $TQ \times \R$, that is, in local $TQ \times \R$ bundle coordinates,
\begin{equation}\label{eq:canonical_endomorphism}
    S = \dd q^i \otimes \frac{\partial}{\partial \dot{q}^i}.
\end{equation}

The energy of the system is defined by
\begin{equation}
    E_L = \Delta(L) - L = \dot{q}^i \pdv{L}{\dot{q}^i} - L,
\end{equation}
where $\Delta$ is the Liouville vector field on $TQ$ extended to $TQ\times \R$ in the natural way.

The Reeb vector field of $\eta_L$, which we will denoted by $\Reeb_L$ is given by
\begin{equation}\label{eq:flat_iso}
    \Reeb_L = \frac{\partial}{\partial z} - 
    W^{ij} \frac{\partial^2 L}{\partial \dot{q}^i \partial z} \pdv{}{\dot{q}^j},
\end{equation}
where $(W^{ij})$ is the inverse of the Hessian matrix with respect to the velocities $W^{ij}$ (Equation~\eqref{eq:hessian}).

The Hamiltonian vector field of the energy $E_L$ will be denoted $\xi_L = X_{E_L}$, hence
\begin{equation}
    \flat_L(\xi_L) = \dd E_L - (\Reeb_L(E_L)+E_L)\eta_L,
\end{equation}
where $\flat_L(v) = \contr{v} \dd \eta_L + \eta_L (v) \eta_L$ is the isomorphism defined in Equation~\eqref{eq:flat_iso} for this particular contact structure.

$\xi_L$ is a second order differential equation (SODE) (that is, $S(\xi_L) = \Delta$) and its solutions are just the ones of the Herglotz equations (also called generalized Euler-Lagrange equations) for $L$ (see~\cite{deLeon2019}):
\begin{equation}\label{eq:herglotz}
    \dv{}{t} \left({\pdv{L}{\dot{q}^i}}\right) - \pdv{L}{q^i} = \pdv{L}{\dot{q}^i} \pdv{L}{z}.
\end{equation}

There exists a \emph{Legendre transformation} for contact Lagrangian systems. Given the vector bundle $TQ\times \R \to Q \times \R$, one can consider the fiber derivative $\F L$ of $L:TQ \times \R \to \R$, which has the following coordinate expression in natural coordinates:
\begin{equation}
	\begin{aligned}
		\F L:TQ \times \R &\to T^*Q \times \R\\
		(q^i,\dot{q}^i,z) &\mapsto (q^i, \pdv{L}{\dot{q}^i},z).
	\end{aligned}
\end{equation}
If we consider the contact structure $\eta_Q$~\eqref{eq:cotangent_contact_structure} on $T^*Q \times \R$, and $\eta_L$ on $TQ\times \R$ then $\F L$ is a local contactomorphism. 

In the case that $\F L$ is a global contactomorphism, then we say that $L$ is \emph{hyperregular}. In this situation, we can define a Hamiltonian $H:T^*Q\times \R \to \R$ such that $E_L = H \circ \F L$ and the Lagrangian and Hamiltonian dynamics are $\F L$-related, that is, $\F L_* \xi_L = X_H$.

\subsubsection{Herglotz variational principle}
Equations (\ref{eq:herglotz}) can be derived from a modified variational principle~\cite{Herglotz1930}. In contrast to the symplectic case, the action is not a definite integral. The contact action is the value at the endpoint of solution to a non-autonomous ODE.

In~\cite{deLeon2019} we defined the action on the space of curves with fixed endpoints. However, for our purposes here it is more convenient to define the action on the space of all curves and all initial conditions and then restrict it to the appropriate submanifold.

Let $\Omega$ be the (infinite dimensional) manifold of curves on $Q$, $c:[0,1]\to Q$. We denote by  $\Omega(q_0, q_1) \subseteq \Omega$, where $q_0,q_1 \in Q$, the submanifold whose elements are the smooth curves $c \in \Omega$ such that $c(0)=q_0$, $c(1)=q_1$. The tangent space of $\Omega$ at a curve $c$ is given by vector fields over $c$. In the case of $T_c \Omega(q_0,q_1)$, the vector fields over $c$ vanish a the endpoints. That is,
%\todo[inline]{References}
\begin{align}
        T_c \Omega &=  \set{
            \delta v \in \C^\infty([0,1] \to TQ) \mid 
            \tau_Q \circ \delta v = c}, \\
        T_c \Omega(q_0,q_1) &=  \set{
            \delta c \in  T_c \Omega  \mid \delta c(0)=0, \, \delta c(1)=0 
            }.
\end{align}

We define the operator 
\begin{equation}
    \mathcal{Z}:\Omega \times \R \to \C^\infty ([0,1] \to \R),
\end{equation}
 which assigns to each curve and initial condition $(c,z_0)$ the curve $\mathcal{Z}_{z_0}(c)$ that solves the following ODE:
\begin{equation}\label{contact_var_ode}
\begin{dcases}
    \dv{\mathcal{Z}_{z_0}(c)}{t} &= L(c, \dot{c}, \mathcal{Z}_{z_0}(c)),\\
    \mathcal{Z}_{z_0}(c)(0) &= z_0.
    \end{dcases}
\end{equation}

Now we define the \emph{contact action functional} as the map which assigns to each curve $c$ and initial condition $z_0$, the solution to the previous ODE evaluated at the endpoint:
\begin{equation}\label{eq:contact_action}
    \begin{aligned}
        \mathcal{A}: \Omega\times \R &\to \R,\\
        (c,z_0) &\mapsto \mathcal{Z}_{z_0}(c)(1).
    \end{aligned}
\end{equation}

When restricted to $\Omega(q_0,q_1)\times \set{z_0}$, the critical points of $\mathcal{A}$ are the solutions to Herglotz equation. More precisely,
\begin{theorem}[Herglotz variational principle]
    Let $L: TQ \times \R \to \R$ be a Lagrangian function and let $c\in \Omega(q_0, q_1)$ and $z_0 \in \R$. Then, $(c,\dot{c}, \mathcal{Z}_{z_0}(c))$ satisfies the Herglotz equations~\eqref{eq:herglotz} if and only if $c$ is a critical point of $\mathcal{A}_{z_0}\vert_{\Omega(q_0, q_1)}$.
\end{theorem}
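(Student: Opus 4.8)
The plan is to compute the first variation of the action functional $\mathcal{A}$ directly. The essential difficulty, in contrast to the symplectic case, is that $\mathcal{A}(c,z_0)=\mathcal{Z}_{z_0}(c)(1)$ is not a definite integral of $L$ but the endpoint value of the solution of the non-autonomous ODE~\eqref{contact_var_ode}; hence the variation cannot simply be moved inside an integral, and I must first determine how the solution $\mathcal{Z}_{z_0}(c)$ itself responds when $c$ is varied.

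First I would fix $z_0$ and take a one-parameter family $c_s\in\Omega(q_0,q_1)$ with $c_0=c$ and infinitesimal variation $\delta c=\partial_s c_s\vert_{s=0}$ satisfying $\delta c(0)=\delta c(1)=0$, and set $z_s(t)=\mathcal{Z}_{z_0}(c_s)(t)$. Differentiating the defining relation $\dot z_s=L(c_s,\dot c_s,z_s)$ with respect to $s$ at $s=0$, and writing $\delta z(t)=\partial_s z_s(t)\vert_{s=0}$, produces the linear inhomogeneous first-order ODE
\[
\dv{\delta z}{t}-\pdv{L}{z}\,\delta z=\pdv{L}{q^i}\,\delta c^i+\pdv{L}{\dot{q}^i}\,\delta\dot{c}^i ,
\]
subject to $\delta z(0)=0$, since $z_s(0)=z_0$ does not depend on $s$. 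The quantity I am after is exactly $\partial_s\mathcal{A}(c_s,z_0)\vert_{s=0}=\delta z(1)$.

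The key technical step is to solve this linear ODE by means of the integrating factor
\[
\sigma(t)=\exp\!\left(-\int_0^t \pdv{L}{z}\,\dd\tau\right),
\]
for which $\dot\sigma=-\sigma\,\pdv{L}{z}$, so that the left-hand side becomes $\dv{}{t}(\sigma\,\delta z)$. Integrating from $0$ to $1$ and using $\delta z(0)=0$, $\sigma(0)=1$, I obtain $\sigma(1)\,\delta z(1)$ as the integral of $\sigma\bigl(\pdv{L}{q^i}\delta c^i+\pdv{L}{\dot{q}^i}\delta\dot{c}^i\bigr)$. I would then integrate the velocity term by parts; the boundary contributions vanish because $\delta c$ vanishes at both endpoints, and after using $\dot\sigma=-\sigma\,\pdv{L}{z}$ the common factor $\sigma$ recombines to give
\[
\sigma(1)\,\delta z(1)=\int_0^1 \sigma\left[\pdv{L}{q^i}+\pdv{L}{z}\pdv{L}{\dot{q}^i}-\dv{}{t}\!\left(\pdv{L}{\dot{q}^i}\right)\right]\delta c^i\,\dd t .
\]

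Finally, since $\sigma(1)\neq 0$, the curve $c$ is a critical point of $\mathcal{A}_{z_0}\vert_{\Omega(q_0,q_1)}$ precisely when this integral vanishes for every admissible variation $\delta c$. Because $\sigma(t)>0$ throughout $[0,1]$, the fundamental lemma of the calculus of variations then forces the bracketed expression to vanish identically, which is exactly the Herglotz equation~\eqref{eq:herglotz}; as the fundamental lemma is itself an equivalence, this single computation yields both implications at once. I expect the integrating-factor manipulation to be the crux of the argument: it is the device that disentangles the self-coupling of $\delta z$ through $\pdv{L}{z}$ and generates the characteristic dissipative term $\pdv{L}{\dot{q}^i}\pdv{L}{z}$ that is absent from the classical Euler--Lagrange equations.
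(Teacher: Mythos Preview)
Your proposal is correct and follows essentially the same route as the paper: both differentiate the defining ODE~\eqref{contact_var_ode} to obtain a linear first-order equation for the variation $\delta z$ (the paper calls it $\psi$), solve it with the integrating factor $\sigma(t)=\exp\bigl(-\int_0^t \partial L/\partial z\,\dd\tau\bigr)$, and integrate by parts so that the Herglotz expression appears inside the integral. Your write-up is in fact more streamlined, since you restrict to variations in $T_c\Omega(q_0,q_1)$ from the start, causing the boundary terms to drop out immediately, and you invoke the fundamental lemma explicitly to close both implications; the paper instead carries the boundary terms along (computing $T\mathcal{Z}$ for arbitrary $\delta c$) and also works out $\partial\mathcal{Z}_{z_0}/\partial z_0$, neither of which is strictly needed for the theorem as stated.
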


\begin{proof}
    We will compute $T\mathcal{Z}$. Let $c \in \Omega(q_0, q_1)$ be a curve and consider some tangent vector $\delta c \in T_c \Omega$. We will first compute the partial derivative with respect to $c$ by fixing $z_0\in \R$, and then we will fix the curve and compute the partial derivative with respect to the initial condition $z_0$. In order to simplify the notation, let $\chi=(c,\dot{c}, \mathcal{Z}_{z_0}(c))$ and let $\psi = T_c \mathcal{Z}_{z_0}(\delta v)$.
    
    Consider a curve $c_\lambda \in \Omega$ (that is, a smoothly parametrized family of curves) such that
    \begin{equation*}
        \delta c={\dv{c_\lambda}{\lambda}}\Big{\vert}_{\lambda=0}
    \end{equation*} 
    Since $\mathcal{Z}_{z_0}(c_\lambda)(0)=z_0$ for all $\lambda$, then $\psi(0)=0$.
    
    We compute the derivative of $\psi$ by interchanging the order of the derivatives using the ODE defining $\mathcal{Z}$:
    \begin{align*}
        \dot{\psi}(t) &= 
        {\dv{}{\lambda}\dv{}{t} 
        \mathcal{Z}_{z_0}(c_\lambda(t))}\vert_{\lambda=0} \\&=
        \dv{}{\lambda}{L(c_\lambda(t), \dot{c}_\lambda(t),\mathcal{Z}(c_\lambda)(t))}\vert_{\lambda=0} \\&=
        \pdv{L}{q^i}(\chi(t)) {\delta c} ^i(t) +
        \pdv{L}{\dot{q}^i}(\chi(t)) {\delta \dot{c}}^i(t) + 
        \pdv{L}{z}(\chi(t)) \psi(t).
    \end{align*}

Hence, the function $\psi$ is the solution to the ODE above. Explicitly
\begin{equation}
     \psi(t) = \frac{1}{\sigma(t)} \int_0^t \sigma(\tau) \left({
        \pdv{L}{q^i}(\chi(\tau)) {\delta c} ^i(\tau) + \pdv{L}{\dot{q}^i}(\chi(\tau)) {\delta \dot{c}}^i(\tau)
        } \right) \dd \tau,
\end{equation}
where
\begin{equation}\label{eq:sigma}
    \sigma(t) = \exp \left({-\int_0^t \pdv{L}{z}(\chi(\tau)) \dd \tau}\right) > 0.
\end{equation}

Integrating by parts and using that $\psi(0)=0$, we get the following expression
\begin{equation*}
	\begin{split}
		\psi(t) & = \delta {c}^i(t) \pdv{L}{\dot{q}^i}(\chi(t)) \\
		 & + \frac{1}{\sigma(t)}
		\int_0^t  {\delta c}^i (t) \left(
		\sigma(t) \pdv{L}{q^i}(\chi(t)) - 
		\dv{}{t} \left(\sigma(t)\pdv{L}{\dot{q}^i}(\chi(t))\right) 
		\right) \dd t \\
		& = \delta {c}^i(t) \pdv{L}{\dot{q}^i}(\chi(t)) \\
		&+ \delta {c}^i(t) \pdv{L}{\dot{q}^i}(\chi(t)) \\
		& + \frac{1}{\sigma(t)} \int_0^t  {\delta c} ^i(\tau) \sigma(\tau) \left(
		\pdv{L}{q^i}(\chi(\tau)) - 
		\dv{}{\tau} \pdv{L}{\dot{q}^i}(\chi(\tau))+  \pdv{L}{\dot{q}^i}(\chi(\tau)) \pdv{L}{z}(\chi(\tau)) \right) \dd \tau.
	\end{split}
\end{equation*}

Now we compute the partial derivative with respect to the initial condition $z_0$. We interchange the order of the derivatives
\begin{equation}
    \dv{}{t} \pdv{\mathcal{Z}_{z_0}(c)}{z_0} = 
    \pdv{L}{z} (c,\dot{c}, \mathcal{Z}(c)) \pdv{\mathcal{Z}_{z_0}(c)}{z_0}
\end{equation}
\end{proof}

If we solve for $\pdv{\mathcal{Z}_{z_0}(c)}{z_0}$ the ODE above using that $\pdv{\mathcal{Z}_{z_0}(c)}{z_0}(0)=1$, we notice that
\begin{equation}
    \pdv{\mathcal{Z}_{z_0}(c)}{z_0}(t) = 
    \exp \left({\int_0^t \pdv{L}{z}(\chi(\tau)) \dd \tau}\right) = \frac{1}{\sigma(t)} ,
\end{equation}
where $\sigma$ is defined in~\eqref{eq:sigma}.

\subsubsection{Symmetries and dissipated quantities on contact Lagrangian systems}\label{ssec:lie_group_lagrangian}
As explained in~\cite{Gaset2019, deLeon2020}, given a symmetry on a contact system, one does not obtain a conserved quantity, but a quantity $f$ that dissipates at the same rate as the Lagrangian.

Given a contact Hamiltonian system $(M,\eta, H)$, we say that a quantity $f:M\to \R$ is \emph{dissipated} if 
\begin{equation}\label{eq:dissipated_infinitesimal}
	\liedv{X_H} f = -\Reeb(H) f,
\end{equation}
or, equivalently,
\begin{equation}
	{\phi_t}^*(f) = \sigma_t,
\end{equation}
where $\phi$ is the flow of $X_H$ and $\sigma_t$, its conformal factor.

Notice that the quotient of two dissipated quantities (if it is well defined) is a conserved quantity. 

We end this section by stating a Noether theorem in this setting, which provides a link between symmetries of the Lagrangian and conserved quantities.

Let $L:TQ \times \R \to \R$ be a regular Lagrangian. Let $G$ be a Lie group  acting on $Q$
\begin{equation}
    \Phi: G\times Q \to Q.
\end{equation}

We defined the lifted action as
\begin{equation}
    \tilde{\Phi}: G\times TQ\times \R \to TQ\times \R,
\end{equation}
given by $\tilde{\Phi}(g, v_q, z) = (T_q\Phi(v_q), z)$ where $v_q\in T_qQ$. We denote by $\xi_{TQ\times\R}$ to the vector field on $TQ\times \R$ which is the infinitesimal generator by the lifted action of an element $\xi$ of the Lie algebra  $\mathfrak{g}$ of $G$.

We define the momentum map $J_L$:
\begin{equation}
    \begin{aligned}
        J_L: TQ \times \R &\to \mathfrak{g}^*,\\
       \langle J_L(v_q,z), \xi\rangle & = - \eta_L(\xi_{TQ \times \R}).   
    \end{aligned}
\end{equation}
and we define $\hat{J}(\xi): TQ\times \R\rightarrow \R$ by  $\hat{J}(\xi)(v_q, z) =\langle J_L(v_q,z), \xi\rangle$.

Then we have the following~\cite[Section~4.1]{deLeon2020}
\begin{theorem}
	Let the lifted action $\tilde{\Phi}$ preserve the Lagrangian $L$, then $\tilde{\Phi}$ acts by contactomorphisms on $(TQ\times\R,\eta_L,E_L)$ and $\hat{J}(\xi)$	is a dissipated quantity for every $\xi \in \mathfrak{g}$.
\end{theorem}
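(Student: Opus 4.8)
The plan is to prove the two assertions in turn, reducing the dissipation statement to the infinitesimal version of the invariance together with the contact identities recalled in Section~2.1.

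First I would establish that each $\tilde{\Phi}_g$ is a contactomorphism, i.e.\ $\tilde{\Phi}_g^* \eta_L = \eta_L$. Since $\eta_L = \dd z - \theta_L$ and $\tilde{\Phi}_g$ fixes the $\R$-coordinate, it suffices to show $\tilde{\Phi}_g^*\theta_L = \theta_L$. The key ingredient is the naturality of the vertical endomorphism $S$ under tangent lifts: for the tangent-lifted diffeomorphism $T\Phi_g$ (extended by the identity on $\R$) one has $S \circ T\tilde{\Phi}_g = T\tilde{\Phi}_g \circ S$ as maps on $T(TQ\times\R)$. Because $\theta_L = S^*(\dd L)$, this commutation gives $\tilde{\Phi}_g^*\theta_L = S^*(\tilde{\Phi}_g^*\dd L) = S^*\big(\dd(\tilde{\Phi}_g^* L)\big)$, and the hypothesis $\tilde{\Phi}_g^* L = L$ then yields $\tilde{\Phi}_g^*\theta_L = S^*(\dd L) = \theta_L$, hence $\tilde{\Phi}_g^*\eta_L = \eta_L$.

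Next, differentiating $\tilde{\Phi}_{\exp(t\xi)}^*\eta_L = \eta_L$ at $t=0$ gives $\liedv{\xi_{TQ\times\R}}\eta_L = 0$, so $\xi_{TQ\times\R}$ is an infinitesimal contactomorphism. By the characterization recalled above, $\xi_{TQ\times\R}=X_f$ with $f = -\eta_L(\xi_{TQ\times\R}) = \hat{J}(\xi)$, and moreover $\Reeb_L(\hat{J}(\xi)) = 0$. I would then argue that $E_L$ is invariant as well: since the Liouville field $\Delta$ is natural under tangent lifts, $\tilde{\Phi}_g^* E_L = \tilde{\Phi}_g^*(\Delta(L)-L) = \Delta(\tilde{\Phi}_g^* L) - \tilde{\Phi}_g^* L = E_L$, whence $\xi_{TQ\times\R}(E_L)=X_{\hat{J}(\xi)}(E_L)=0$. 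Finally, to reach the dissipation identity, I would rewrite the definition of a Hamiltonian vector field as $\dd g = \contr{X_g}\dd\eta_L + \Reeb_L(g)\eta_L$, together with $\eta_L(X_g) = -g$; applying this to $g=E_L$ and $g=\hat{J}(\xi)$ and using the antisymmetry of $\dd\eta_L$ gives the symmetric bracket identity $X_{E_L}(\hat{J}(\xi)) + X_{\hat{J}(\xi)}(E_L) = -\Reeb_L(E_L)\,\hat{J}(\xi) - \Reeb_L(\hat{J}(\xi))\,E_L$. Substituting $X_{\hat{J}(\xi)}(E_L)=0$ and $\Reeb_L(\hat{J}(\xi))=0$, and noting $X_{E_L}=\xi_L$, yields exactly $\liedv{\xi_L}\hat{J}(\xi) = -\Reeb_L(E_L)\,\hat{J}(\xi)$, which is the definition of a dissipated quantity.

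The essential obstacles are the two naturality statements — that tangent lifts commute with $S$ and preserve $\Delta$ — which underpin both the contactomorphism property and the invariance of $E_L$; once these are in hand, the remaining work is careful sign bookkeeping in the contact Hamiltonian identities, with the symmetric bracket relation doing the real work of converting the two invariance facts $X_{\hat{J}(\xi)}(E_L)=0$ and $\Reeb_L(\hat{J}(\xi))=0$ into the dissipation law.
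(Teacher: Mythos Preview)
The paper does not supply its own proof of this theorem; it is stated with a reference to \cite[Section~4.1]{deLeon2020}. Your argument is therefore not to be compared with anything in the text, and it stands on its own as a correct and self-contained proof: the naturality of $S$ and $\Delta$ under tangent lifts gives the invariance of $\eta_L$ and $E_L$, the infinitesimal consequence $\liedv{\xi_{TQ\times\R}}\eta_L=0$ identifies $\xi_{TQ\times\R}$ with $X_{\hat J(\xi)}$ and yields $\Reeb_L(\hat J(\xi))=0$, and your symmetric bracket identity $X_f(g)+X_g(f)=-\Reeb_L(g)f-\Reeb_L(f)g$ (which follows directly from $\dd f=\contr{X_f}\dd\eta_L+\Reeb_L(f)\eta_L$ and $\eta_L(X_f)=-f$) then converts these two facts into the dissipation law.
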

 
\section{Discrete contact mechanics}

In this section, we will extend the approach to discrete mechanics as in\cite{marsden-west} to the case of contct dynamics (see also \cite{VBS}).

Let $L_{d}:Q\times Q \times \R\rightarrow \R$ be a discrete Lagrangian function. In our point of view $Q\times Q \times \R$ will be the discrete space corresponding to the manifold $TQ\times \R$, where continuous contact Lagrangian mechanics takes place. We fix a \textit{time-step} $h>0$, on which $L_{d}$ depends, though we will omit this explicit dependence.

For each $N \in\N$, let us define the \textit{discrete path space} as the space containing sequences on $Q$ with length $N+1$, i.e.,
\begin{equation*}
\C_{d}^{N}(Q)=\{ (q_{0},q_{1}, ..., q_{N} ) | q_{k}\in Q,\ k=0,\ldots, N \}.
\end{equation*}
The set $\C_{d}^{N}(Q)$ is a manifold and it is canonically identified with the product space $Q^{N+1}$.

To each $q_{d}\in \C_{d}^{N}(Q)$  and each $z_{0}\in\R$ we will associate another sequence $(z_{k})\in \R^{N+1}$ defined by
\begin{equation}\label{zkdefinition}
z_{k+1}-z_{k}=L_{d}(q_{k},q_{k+1},z_{k}), \quad k=0,..., N-1.
\end{equation}
In the sequel, for each $1\leqslant k \leqslant N$, we will denote by $\mathcal{Z}_{k}$ the function $\mathcal{Z}_{k}:Q \times Q \times \R\longrightarrow \R$
\begin{equation*}
         \mathcal{Z}_{k}(q_{k-1},q_{k},z_{k-1})=z_{k-1}+L_{d}(q_{k-1},q_{k},z_{k-1}).
\end{equation*}

We define the \textit{contact discrete action} to be the functional that for each point $q_{d}\in \C_{d}^{N}(Q)$ and each real number $z_{0}$ returns as output the real number $z_{N}$ obtained recursively from \eqref{zkdefinition}, i.e.,
\begin{equation}
	\begin{split}
		\mathcal{A}_{d}: \ \C_{d}^{N}(Q)\times \R & \longrightarrow \R \\
		(q_{d},z_{0}) & \mapsto z_{N}.
	\end{split}
\end{equation}

A \textit{variation} of a sequence $q_{d}\in \C_{d}^{N}(Q)$ is a curve $\widetilde q_{d}:(-\epsilon,\epsilon)\rightarrow \C_{d}^{N}(Q)$ satisfying $\widetilde q_{d}(0)=q_{d}$. Given such a variation, we will define its \textit{infinitesimal variation} by
\begin{equation*}
\delta q_{d} := \left. \frac{\dd}{\dd \epsilon}\right|_{\epsilon=0} \widetilde q_{d}(\epsilon)=(\delta q_{0}, ..., \delta q_{N}),
\end{equation*}
where $\delta q_{k}:=\left. \frac{\dd}{\dd \epsilon}\right|_{\epsilon=0} \widetilde{q}_{k} (\epsilon)$.

\begin{proposition}
Let $L_{d}$ be a smooth discrete Lagrangian. Then, if we fix $z_0\in \R$, we obtain the functional
\begin{equation*}
	\begin{split}
		\mathcal{A}_{d,z_{0}}: \ \C_{d}^{N}(Q) & \longrightarrow \R \\
		q_{d} & \mapsto \mathcal{A}_{d}(q_{d},z_{0}).
	\end{split}
\end{equation*}
The differential of the functional $\mathcal{A}_{d,z_{0}}$ is the following
	\begin{equation}\label{differential:discrete:action}
	    \begin{split}
	        \dd \mathcal{A}_{d,z_{0}}(q_{d})=& \sigma_{N}\cdot\cdot\cdot\sigma_{2}\frac{\partial \mathcal{Z}_{1}}{\partial q_{0}}(q_{0},q_{1},z_{0})\dd q_{0} \\
	        & +\sum_{k=1}^{N-1} \prod_{j=k+2}^{N}\sigma_{j}\cdot \left( \frac{\partial \mathcal{Z}_{k+1}}{\partial q_{k}}+\frac{\partial \mathcal{Z}_{k+1}}{\partial z_{k}}\frac{\partial \mathcal{Z}_{k}}{\partial q_{k}} \right) \dd q_{k} \\
	        & +\frac{\partial \mathcal{Z}_{N}}{\partial q_{N}}(q_{N-1},q_{N},z_{N-1})\dd q_{N},
	    \end{split}
	\end{equation}
where we are using the identification of $\C_{d}^{N}(Q)$ with $Q^{N+1}$ and for each $1 \leqslant j\leqslant N$
\begin{equation*}
    \sigma_{j}=\frac{\partial \mathcal{Z}_{j}}{\partial z_{j-1}}(q_{j-1},q_{j},z_{j-1}).
\end{equation*}
\end{proposition}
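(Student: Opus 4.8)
The plan is to differentiate the defining recursion directly and solve the resulting linear difference equation, mirroring the computation of $T\mathcal{Z}$ in the continuous Herglotz principle above. Fix $z_0$ and a variation $\widetilde{q}_d$ of $q_d$ with infinitesimal variation $\delta q_d = (\delta q_0, \ldots, \delta q_N)$, and write $\delta z_k$ for the induced variation of $z_k$. Since $z_k = \mathcal{Z}_k(q_{k-1}, q_k, z_{k-1})$ for $1 \le k \le N$ and $z_0$ is held fixed, differentiating at $\epsilon = 0$ gives the linear recursion
\begin{equation*}
\delta z_k = \frac{\partial \mathcal{Z}_k}{\partial q_{k-1}}\,\delta q_{k-1} + \frac{\partial \mathcal{Z}_k}{\partial q_k}\,\delta q_k + \sigma_k\,\delta z_{k-1}, \qquad \delta z_0 = 0,
\end{equation*}
where $\sigma_k = \partial \mathcal{Z}_k/\partial z_{k-1}$ is precisely the factor governing how a perturbation of $z_{k-1}$ propagates to $z_k$. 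Because $\dd \mathcal{A}_{d,z_0}(q_d)$ is by definition the linear map $\delta q_d \mapsto \delta z_N$, it suffices to solve this recursion for $\delta z_N$.

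Next I would solve the recursion explicitly. An immediate induction on $k$ (unrolling $\delta z_k = \sigma_k\,\delta z_{k-1} + \text{(source}_k)$ and using $\delta z_0 = 0$) yields
\begin{equation*}
\delta z_N = \sum_{k=1}^{N} \Big(\prod_{j=k+1}^{N}\sigma_j\Big)\left( \frac{\partial \mathcal{Z}_k}{\partial q_{k-1}}\,\delta q_{k-1} + \frac{\partial \mathcal{Z}_k}{\partial q_k}\,\delta q_k \right),
\end{equation*}
with the convention that the empty product (for $k=N$) equals $1$. This is the compact form of the differential; the remaining work is purely a reindexing to collect the coefficient of each $\delta q_m$.

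Finally I would collect terms by the index $m$ of $\delta q_m$. The endpoint $q_0$ occurs only through the $k=1$ summand and $q_N$ only through the $k=N$ summand, giving directly the first and last lines of \eqref{differential:discrete:action}. Each interior vertex $q_m$ with $1 \le m \le N-1$ instead receives two contributions: the term $\partial \mathcal{Z}_{m+1}/\partial q_m$ from the $k=m+1$ summand (with prefactor $\prod_{j=m+2}^N \sigma_j$), and the term $\partial \mathcal{Z}_m/\partial q_m$ from the $k=m$ summand (with prefactor $\prod_{j=m+1}^N \sigma_j$). The key algebraic step is to rewrite the second prefactor as $\prod_{j=m+1}^N \sigma_j = \sigma_{m+1}\prod_{j=m+2}^N \sigma_j$ and to recall $\sigma_{m+1} = \partial \mathcal{Z}_{m+1}/\partial z_m$; factoring out the common $\prod_{j=m+2}^N \sigma_j$ then produces exactly the bracket $\frac{\partial \mathcal{Z}_{m+1}}{\partial q_m} + \frac{\partial \mathcal{Z}_{m+1}}{\partial z_m}\frac{\partial \mathcal{Z}_m}{\partial q_m}$ appearing in the middle sum. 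I expect the only real obstacle to be this bookkeeping — keeping the ranges of the $\sigma$-products consistent and correctly assigning the double dependence of the interior vertices, since each $q_m$ enters both as the final argument of step $m$ and (directly and through $z_m$) as the initial argument of step $m+1$.
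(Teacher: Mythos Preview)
Your proof is correct and is essentially the same chain-rule computation as the paper's. The only difference is organizational: the paper views $\mathcal{A}_{d,z_0}$ as the nested composition $\mathcal{Z}_N(q_{N-1},q_N,\mathcal{Z}_{N-1}(\ldots))$ and computes each $\partial \mathcal{A}_{d,z_0}/\partial q_k$ separately by chain rule, whereas you differentiate the recursion $z_k = \mathcal{Z}_k(q_{k-1},q_k,z_{k-1})$, unroll to get $\delta z_N$ as a single sum, and then collect by $\delta q_m$ --- two presentations of the same calculation.
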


%We postpone the proof to the Appendix, since it is not fundamental to the development of the theory.

\begin{proof}
    Using the identification of $\C_{d}^{N}(Q)$ with $Q^{N+1}$, note that the discrete action may be rewritten as
    \begin{equation*}
        \mathcal{A}_{d,z_{0}}(q_{d})=\mathcal{Z}_{N}(q_{N-1},q_{N},\mathcal{Z}_{N-1}(q_{N-2},q_{N-1},\mathcal{Z}_{N-2}(...\mathcal{Z}_{1}(q_{0},q_{1},z_{0})...)).
    \end{equation*}
    Using that
    \begin{equation*}
        \dd \mathcal{A}_{d,z_{0}}(q_{d})=\frac{\partial \mathcal{A}_{d,z_{0}}}{\partial q_{0}} \dd q_{0}+\sum_{k=1}^{N-1}\frac{\partial \mathcal{A}_{d,z_{0}}}{\partial q_{k}} \dd q_{k}+\frac{\partial \mathcal{A}_{d,z_{0}}}{\partial q_{N}} \dd q_{N}.
    \end{equation*}
    and applying the chain rule, we deduce that
    \begin{equation*}
        \frac{\partial \mathcal{A}_{d,z_{0}}}{\partial q_{0}}=\frac{\partial \mathcal{Z}_{N}}{\partial z_{N-1}}\cdot\cdot\cdot\frac{\partial \mathcal{Z}_{2}}{\partial z_{1}}\frac{\partial \mathcal{Z}_{1}}{\partial q_{0}},
    \end{equation*}
    since the function $\mathcal{Z}_{1}$ is the only one that depends on $q_{0}$ among all the $N$ functions $\mathcal{Z}_{k}$.
    It is also clear that
    \begin{equation*}
        \frac{\partial \mathcal{A}_{d,z_{0}}}{\partial q_{N}}=\frac{\partial \mathcal{Z}_{N}}{\partial q_{N}},
    \end{equation*}
    since none of the functions $\mathcal{Z}_{k}$ depend on $q_{N}$ except the function $\mathcal{Z}_{N}$.
    Finally if $1\leqslant k \leqslant N-1$ we have that
    \begin{equation*}
        \frac{\partial \mathcal{A}_{d,z_{0}}}{\partial q_{k}}=\frac{\partial \mathcal{Z}_{N}}{\partial z_{N-1}}\cdot\cdot\cdot\frac{\partial \mathcal{Z}_{k+2}}{\partial z_{k+1}}\left(\frac{\partial \mathcal{Z}_{k+1}}{\partial q_{k}}+\frac{\partial \mathcal{Z}_{k+1}}{\partial z_{k}}\frac{\partial \mathcal{Z}_{k}}{\partial q_{k}}\right),
    \end{equation*}
    where we applied the chain rule and the fact that the functions $\mathcal{Z}_{k+1}$ and $\mathcal{Z}_{k}$ are the only ones that depend on $q_{k}$. Hence, we finished the proof.
\end{proof}

\begin{remark}
Let us see the special case $N=2$, where we can directly compute the differential of the action:

Let $L_{d}$ be a smooth discrete Lagrangian. In the case where $N=2$, the differential of the discrete action function satisfies:
	\begin{equation}
		\begin{split}
			&\dd \mathcal{A}_{d,z_{0}} =  \left( D_{1}L_{d} (q_{1},q_{2},z_{1})+(1+D_{z}L_{d}(q_{1},q_{2},z_{1}) D_{2}L_{d}(q_{0},q_{1},z_{0}) \right) \dd q_{1} \\
		& \quad\ + D_{2}L_{d}(q_{1},q_{2},z_{1}) \dd q_{2} + (1+D_{z}L_{d}(q_{1},q_{2},z_{1}))D_{1}L_{d}(q_{0},q_{1},z_{0}) \dd q_{0}.
		\end{split}
	\end{equation}
\end{remark}

%Indeed, computing $z_{2}$ we see that
%\begin{equation}
%	\begin{split}
%		z_{2} & = L_{d}(q_{1},q_{2},z_{1})+z_{1} \\
%			& = L_{d}(q_{1},q_{2},z_{1})+ L_{d}(q_{0},q_{1},z_{0})+z_{0}\\
%			& = L_{d}(q_{1},q_{2}, L_{d}(q_{0},q_{1}, z_{0})+z0 )+L_{d}(q_{0},q_{1},z_{0})+z0 \\
%	\end{split}
%\end{equation}

%Therefore computing the derivative we have that
%	\begin{equation}
%		\begin{split}
%			\langle d \mathcal{A}_{d},\delta q_{d} \rangle = &  \langle D_{1}L_{d}(q_{1},q_{2},z_{1})+ D_{z}L_{d}(q_{1},q_{2},z_{1}) d z_{1}+D_{2}L_{d}(q_{0},q_{1},z_{0}), \delta q_{1} \rangle \\
%			& + \langle D_{z}L_{d}(q_{1},q_{2},z_{1}) d z_{1} +D_{1}L_{d}(q_{0},q_{1},z_{0}), \delta q_{0} \rangle \\
%			& + \langle D_{2}L_{d}(q_{1},q_{2},z_{1}), \delta q_{2} \rangle \\
%			& = \langle D_{1}L_{d}(q_{1},q_{2},z_{1})+D_{2}L_{d}(q_{0},q_{1},z_{0})\\
%			& + D_{z}L_{d}(q_{1},q_{2},z_{1})D_{2}L_{d}(q_{0},q_{1},z_{0}), \delta q_{1} \rangle \\
%			& + \langle D_{z}L_{d}(q_{1},q_{2},z_{1})D_{1}L_{d}(q_{0},q_{1},z_{0})+D_{1}L_{d}(q_{0},q_{1},z_{0}), \delta q_{0} \rangle \\
%			& + \langle D_{2}L_{d}(q_{1},q_{2},z_{1}), \delta q_{2} \rangle.
%		\end{split}
%	\end{equation}

\begin{definition}[Discrete Herglotz Principle]
Given $z_{0}\in \R$, a discrete path $q_{d}=(q_{0},..., q_{N})$ in $\C_{d}^{N}(Q)$ is said to satisfy the \textit{Discrete Herglotz Principle} if $q_{d}$ is a critical value of the discrete action functional $\mathcal{A}_{d,z_{0}}$ among all paths in $\C_{d}^{N}(Q)$ with fixed end points $q_{0},q_{N}$.
\end{definition}

We will now obtain as a sufficient and necessary condition for a path to satisfy the discrete Herglotz principle, a set of equations called \textit{Discrete Herglotz equations} \cite{VBS}.
\begin{theorem}
Let $L_{d}$ be a discrete Lagrangian function such that $1+D_{z}L_{d}$ is non-vanishing everywhere. Given $z_{0}\in \R$, a discrete path $q_{d}\in \C_{d}^{N}(Q)$ satisfies the discrete Herglotz principle if and only if it satisfies
\begin{equation}\label{DHE}
\begin{split}
		&\hspace{-0.5cm}D_{1}L_{d} (q_{k},q_{k+1},z_{k})+(1+D_{z}L_{d}(q_{k},q_{k+1},z_{k})) D_{2}L_{d}(q_{k-1},q_{k},z_{k-1})=0,\\
    &z_{k}-z_{k-1}=L_{d}(q_{k-1},q_{k},z_{k-1}), 
    \end{split}
\end{equation}
for $k=1,...,N-1$.
\end{theorem}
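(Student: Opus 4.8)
The plan is to read the result off directly from the expression for $\dd\mathcal{A}_{d,z_{0}}$ established in the preceding proposition, and then impose the fixed-endpoint constraint to extract the vanishing conditions. First I would observe that $q_{d}$ satisfies the discrete Herglotz principle exactly when $\dd\mathcal{A}_{d,z_{0}}(q_{d})$ annihilates every admissible infinitesimal variation $\delta q_{d}=(\delta q_{0},\ldots,\delta q_{N})$; since the endpoints $q_{0},q_{N}$ are held fixed, admissibility means $\delta q_{0}=0$ and $\delta q_{N}=0$. Inserting these constraints into \eqref{differential:discrete:action} removes the $\dd q_{0}$ and $\dd q_{N}$ terms, so criticality becomes equivalent to the vanishing, for each $k=1,\ldots,N-1$, of the coefficient
\[
\prod_{j=k+2}^{N}\sigma_{j}\cdot\left(\frac{\partial \mathcal{Z}_{k+1}}{\partial q_{k}}+\frac{\partial \mathcal{Z}_{k+1}}{\partial z_{k}}\frac{\partial \mathcal{Z}_{k}}{\partial q_{k}}\right).
\]
Because the variations $\delta q_{1},\ldots,\delta q_{N-1}$ are independent and arbitrary, both directions of the claimed equivalence will follow simultaneously from this reformulation.

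Next I would unwind the partial derivatives using the defining identity $\mathcal{Z}_{k}(q_{k-1},q_{k},z_{k-1})=z_{k-1}+L_{d}(q_{k-1},q_{k},z_{k-1})$. Straightforward differentiation yields $\partial\mathcal{Z}_{k+1}/\partial q_{k}=D_{1}L_{d}(q_{k},q_{k+1},z_{k})$, $\partial\mathcal{Z}_{k+1}/\partial z_{k}=1+D_{z}L_{d}(q_{k},q_{k+1},z_{k})$, and $\partial\mathcal{Z}_{k}/\partial q_{k}=D_{2}L_{d}(q_{k-1},q_{k},z_{k-1})$; similarly $\sigma_{j}=\partial\mathcal{Z}_{j}/\partial z_{j-1}=1+D_{z}L_{d}(q_{j-1},q_{j},z_{j-1})$. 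Substituting these into the bracketed factor reproduces verbatim the left-hand side of the first discrete Herglotz equation in \eqref{DHE}.

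The final step is to discard the prefactor $\prod_{j=k+2}^{N}\sigma_{j}$, and this is precisely where the hypothesis is used. Since each $\sigma_{j}=1+D_{z}L_{d}(q_{j-1},q_{j},z_{j-1})$ is assumed non-vanishing everywhere, the product is nonzero, so the vanishing of the full coefficient is equivalent to the vanishing of the bracket alone; this produces the first family of equations for $k=1,\ldots,N-1$. The second family, $z_{k}-z_{k-1}=L_{d}(q_{k-1},q_{k},z_{k-1})$, is simply the recursion \eqref{zkdefinition} defining the auxiliary sequence $(z_{k})$ and holds by construction, independently of criticality. I do not anticipate a real obstacle here: the only delicate points are the bookkeeping ensuring the product factor is genuinely nonzero (whence the stated hypothesis on $1+D_{z}L_{d}$) and the careful matching of the slot-derivatives $D_{1}$, $D_{2}$, $D_{z}$ of $L_{d}$ to the partial derivatives of the composite functions $\mathcal{Z}_{k}$.
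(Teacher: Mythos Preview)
Your proposal is correct and follows essentially the same approach as the paper: you invoke the differential of $\mathcal{A}_{d,z_{0}}$ from the preceding proposition, impose the fixed-endpoint constraints $\delta q_{0}=\delta q_{N}=0$, use the arbitrariness of the interior variations $\delta q_{k}$ to isolate each coefficient, expand the partial derivatives of $\mathcal{Z}_{k}$ in terms of $D_{1}L_{d}$, $D_{2}L_{d}$, $D_{z}L_{d}$, and finally use the non-vanishing of the $\sigma_{j}$ to strip off the prefactor. The paper does exactly this, and your added remark that the second equation in \eqref{DHE} is just the defining recursion \eqref{zkdefinition} (hence automatic) is a useful clarification that the paper leaves implicit.
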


\begin{proof}
    Let $q_{d}(\epsilon)$ be a variation of $q_{d}\in\C_{d}^{N}(Q)$ with fixed end-points $q_{0}$ and $q_{N}$. Then $q_{d}$ is a critical value of the discrete action functional if and only if
    \begin{equation*}
        \left.\frac{\dd}{\dd \epsilon}\right|_{\epsilon=0}(\mathcal{A}_{d,z_{0}}(q_{d}(\epsilon)))=\dd \mathcal{A}_{d,z_{0}}(\delta q_{d})=0.
    \end{equation*}
    %Since $q_{d}(s)$ has fixed end points, 
    By \eqref{differential:discrete:action} the last expression is equivalent to
    \begin{equation*}
       \sum_{k=1}^{N-1} \prod_{j=k+2}^{N} \sigma_{j}\cdot \left( \frac{\partial \mathcal{Z}_{k+1}}{\partial q_{k}}+\frac{\partial \mathcal{Z}_{k+1}}{\partial z_{k}}\frac{\partial \mathcal{Z}_{k}}{\partial q_{k}} \right) \delta q_{k} = 0.
    \end{equation*}
    Since the infinitesimal variations $\delta q_{k}$, $1\leq k\leq N-1$, are arbitrary we deduce
    \begin{equation*}
       \prod_{j=k+2}^{N} \sigma_{j}\cdot \left( \frac{\partial \mathcal{Z}_{k+1}}{\partial q_{k}}+\frac{\partial \mathcal{Z}_{k+1}}{\partial z_{k}}\frac{\partial \mathcal{Z}_{k}}{\partial q_{k}} \right) = 0.
    \end{equation*}
    Note that,
    \begin{equation*}
        \sigma_{j}=\frac{\partial \mathcal{Z}_{j}}{\partial z_{j-1}}(q_{j-1},q_{j},z_{j-1})=1+D_{z}L_{d}(q_{j-1},q_{j},z_{j-1})
    \end{equation*}
    is non-vanishing by hypothesis and
    \begin{equation*}
       \frac{\partial \mathcal{Z}_{k+1}}{\partial q_{k}}+\frac{\partial \mathcal{Z}_{k+1}}{\partial z_{k}}\frac{\partial \mathcal{Z}_{k}}{\partial q_{k}}= D_{1}L_{d}(q_{k},q_{k+1},z_{k})+\sigma_{k+1}D_{2}L_{d}(q_{k-1},q_{k},z_{k-1}),
    \end{equation*}
    from where the result follows.
\end{proof}

\begin{remark}
    The discrete principle introduced in \cite{VBS} is just the condition
    \begin{equation*}
        \frac{\partial \mathcal{Z}_{k+1}}{\partial q_{k}}+\frac{\partial \mathcal{Z}_{k+1}}{\partial z_{k}}\frac{\partial \mathcal{Z}_{k}}{\partial q_{k}}=0,
    \end{equation*}
    afer rewriting it in our notation. For discrete Lagrangian functions where $1+D_{z}L_{d}$ is non-vanishing, the condition above is equivalent to the Herglotz discrete principle.
\end{remark}

%%%%%%%%%%%%%%%%%%%%%%%%%%%%%%%%%%%%%%%%%%%%%%%%%%%%%%%%%%%%%%%%%%%%%%%%%%%%%%%%%%%%%%%%%%%

\subsection{Discrete Lagrangian flows and discrete Legendre transforms}

Given a discrete contact Lagrangian $L_{d}$, if $1+D_{z}L_{d}(q_0,q_1,z_0)$ does not vanish, we can define two maps called \textit{discrete Legendre transforms}: $\F^{\pm}L_{d}:Q \times Q \times \R\rightarrow T^{*}Q \times \R$ 
\begin{equation}
    \begin{split}
        & \F^{+}L_{d}(q_0,q_1,z_0)=(q_{1},D_{2}L_{d}(q_0,q_1,z_0),z_{0}+L_d(q_0, q_1, z_0)) \\
        & \F^{-}L_{d}(q_0,q_1,z_0)=\left(q_{0},-\frac{D_{1}L_{d}(q_0,q_1,z_0)}{1+D_{z}L_{d}(q_0,q_1,z_0)},z_{0} \right).
    \end{split}
\end{equation}

\begin{lemma}
$\F^{+}L_{d}$ is a local diffeomorphism if and only if $\F^- L_d$ is a local diffeomorphism.
\end{lemma}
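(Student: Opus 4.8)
The plan is to apply the inverse function theorem: a smooth map between manifolds of the same dimension is a local diffeomorphism exactly at those points where its differential is invertible. Both $\F^+L_d$ and $\F^-L_d$ are maps $Q\times Q\times\R\to T^*Q\times\R$ between $(2n+1)$-dimensional manifolds, so it suffices to prove that their two Jacobian determinants vanish at precisely the same points. In fact I would show the stronger statement that each determinant is a nowhere-vanishing scalar multiple of the other, by exhibiting a single common factor $C$ through which both factor. Throughout I write $\sigma = 1+D_zL_d$, which is nonvanishing by the standing hypothesis.

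First I would fix natural coordinates $(q_0^i,q_1^i,z_0)$ on the source and $(q^i,p_i,z)$ on the target, and write out each Jacobian in $(n+n+1)$ block form. For $\F^+L_d$ the block row coming from the $q_1$-component is $(0,\,I,\,0)$, so a Laplace expansion along these $n$ rows reduces $\det J(\F^+L_d)$, up to sign, to the determinant of the $(n+1)\times(n+1)$ matrix $\begin{pmatrix}(D_1D_2L_d)^{\mathsf T} & D_zD_2L_d\\ (D_1L_d)^{\mathsf T} & \sigma\end{pmatrix}$, in which $\sigma\neq 0$ sits in the corner. Applying the Schur complement formula with respect to the scalar $\sigma$ then gives $\det J(\F^+L_d)=\pm\,\sigma\,\det\!\big((D_1D_2L_d)^{\mathsf T}-\sigma^{-1}(D_zD_2L_d)(D_1L_d)^{\mathsf T}\big)$.

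Next I would treat $\F^-L_d$ in the same way. Here the $q_0$-component contributes the row $(I,\,0,\,0)$ and the $z$-component contributes $(0,\,0,\,1)$, so $\det J(\F^-L_d)$ collapses to the determinant of the single $n\times n$ block $\partial p/\partial q_1$, where $p_i=-D_1L_d/\sigma$. Differentiating this quotient yields $\partial p_i/\partial q_1^j=-\sigma^{-1}(D_1D_2L_d)_{ij}+\sigma^{-2}(D_1L_d)_i(D_2D_zL_d)_j$, a scalar multiple of the mixed Hessian plus a rank-one correction; factoring out $-\sigma^{-1}$ gives $\det J(\F^-L_d)=(-\sigma)^{-n}\det\!\big(D_1D_2L_d-\sigma^{-1}(D_1L_d)(D_2D_zL_d)^{\mathsf T}\big)$. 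The crucial matching step is that, by equality of mixed partials $D_zD_2L_d=D_2D_zL_d=:v$ and invariance of the determinant under transposition, the matrix for $\F^-$ is exactly the transpose of the one for $\F^+$, since $\big(D_1D_2L_d-\sigma^{-1}(D_1L_d)v^{\mathsf T}\big)^{\mathsf T}=(D_1D_2L_d)^{\mathsf T}-\sigma^{-1}v(D_1L_d)^{\mathsf T}$. Writing $C$ for their common determinant, we obtain $\det J(\F^+L_d)=\pm\,\sigma\,C$ and $\det J(\F^-L_d)=(-\sigma)^{-n}C$; as $\sigma$ is nowhere zero, the two determinants vanish at exactly the same points, and the inverse function theorem gives the equivalence.

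I expect the main obstacle to be the careful bookkeeping in the two Jacobian computations — in particular tracking the index and transpose conventions for the mixed Hessian, and correctly handling the rank-one correction terms produced by differentiating the quotient $-D_1L_d/(1+D_zL_d)$ in $\F^-L_d$. Once the algebra is organized, the conceptual crux is recognizing that both determinants factor through the same Schur-complement quantity $C$ up to powers of the nonvanishing conformal factor $\sigma$; it is precisely the symmetry $D_zD_2L_d=D_2D_zL_d$ together with transpose-invariance of the determinant that forces the two reduced matrices to agree.
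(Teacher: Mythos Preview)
Your argument is correct: both Jacobians reduce, after the obvious block expansions, to nowhere-vanishing powers of $\sigma=1+D_zL_d$ times the determinant of the same $n\times n$ matrix (up to transposition), so the inverse function theorem gives the equivalence. The paper's own proof is a single sentence (``a direct consequence of the implicit function theorem''), and your computation is precisely the explicit unpacking of that sentence; the only minor remark is that the sign and transpose bookkeeping you flag as the main obstacle is indeed irrelevant to the conclusion, since all that matters is that the scalar prefactors $\pm\sigma$ and $(-\sigma)^{-n}$ are nonvanishing.
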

\begin{proof}
	It is a direct consequence of the implicit function theorem.  
\end{proof}

The Legendre transforms allow us to rewrite discrete Herglotz equations (\ref{DHE}) as a momentum matching equations as in \cite{marsden-west}. Indeed, provided $1+D_{z}L_{d}(q_0,q_1,z_0)$ is not zero, we may write
\begin{equation}
\F^{+}L_{d}(q_0,q_1,z_0)=\F^{-}L_{d}(q_1,q_2,z_1).
\end{equation}

We have the following theorem
\begin{theorem}\label{3.7}
Suppose $1+D_{z}L_{d}(q_0,q_1,z_0)$ does not vanish. Given a discrete Lagrangian $L_{d}:Q\times Q \times \R\rightarrow \R$, there is a well-defined discrete Lagrangian flow $\Phi_{d}:Q\times Q\times \R\rightarrow Q\times Q \times \R$ if and only if $\F^{-}L_{d}$
is a local diffeomorphism where 
\[
\Phi_{d}=(\F^{-}L_{d})^{-1}\circ  \F^{+}L_{d}
\]
\end{theorem}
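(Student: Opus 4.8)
The plan is to reduce the whole statement to the momentum-matching reformulation of the discrete Herglotz equations \eqref{DHE} recorded just before the theorem, and then to recognize the flow as a literal inversion of $\F^{-}L_{d}$. Recall that a segment $(q_{0},q_{1},z_{0})\mapsto(q_{1},q_{2},z_{1})$ solves \eqref{DHE} if and only if $\F^{+}L_{d}(q_{0},q_{1},z_{0})=\F^{-}L_{d}(q_{1},q_{2},z_{1})$, with $z_{1}=z_{0}+L_{d}(q_{0},q_{1},z_{0})$. The first thing I would record is a structural observation: both Legendre transforms leave the base point and the $\R$-coordinate in their ``matched'' slots untouched, since $\F^{+}L_{d}$ produces a point of the form $(q_{1},\,\cdot\,,z_{1})$ and $\F^{-}L_{d}$ evaluated on the output segment also produces $(q_{1},\,\cdot\,,z_{1})$. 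Hence matching of the first and third components is automatic once $q_{1}$ and $z_{1}$ are read off, and the only genuine equation is the matching of the momentum slot, i.e.\ the determination of $q_{2}$.

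For the implication ($\Leftarrow$): assuming $\F^{-}L_{d}$ is a local diffeomorphism, I would take local smooth inverses and set $\Phi_{d}:=(\F^{-}L_{d})^{-1}\circ\F^{+}L_{d}$. As a composition of smooth maps this is well defined and smooth wherever the inverse exists; by the structural observation its value has the form $(q_{1},q_{2},z_{1})$, and by construction it satisfies the momentum-matching equation, hence \eqref{DHE}. This shows the displayed formula genuinely defines the discrete Lagrangian flow.

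For the implication ($\Rightarrow$): I would argue that well-definedness together with smoothness of $\Phi_{d}$ forces $\F^{-}L_{d}$ to be locally invertible. Writing $\F^{-}L_{d}(u,v,w)=(u,\bar p(u,v,w),w)$ with $\bar p=-D_{1}L_{d}/(1+D_{z}L_{d})$, its Jacobian is block triangular and its determinant equals $\det(\partial\bar p/\partial v)$ up to sign; thus $\F^{-}L_{d}$ is a local diffeomorphism exactly when this block is nonsingular. On the output segment $(u,v,w)=(q_{1},q_{2},z_{1})$ the flow furnishes $q_{2}$ as a smooth function of the data solving $\bar p(q_{1},q_{2},z_{1})=D_{2}L_{d}(q_{0},q_{1},z_{0})$; since this solution map is a smooth inverse in the $v=q_{2}$ slot, the chain rule forces $\partial\bar p/\partial q_{2}$ to be invertible, i.e.\ $\F^{-}L_{d}$ is a local diffeomorphism. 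One can package both directions at once by applying the implicit function theorem to the momentum-matching equation, whose nondegeneracy hypothesis is precisely the local-diffeomorphism condition for $\F^{-}L_{d}$.

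The step I expect to be the main obstacle is the necessity direction, and the point to be careful about is that it is \emph{smoothness} of $\Phi_{d}$, not mere existence and uniqueness of $q_{2}$, that rules out degenerate branches: a bijection can possess a non-smooth inverse with vanishing Jacobian (as for $t\mapsto t^{3}$), so it is essential to use that $\Phi_{d}$ is smooth and invoke the chain rule rather than set-theoretic solvability. A secondary subtlety is that the flow only directly inverts $\F^{-}L_{d}$ along the image of $\F^{+}L_{d}$; I would handle this by noting that the controlling block $\partial\bar p/\partial q_{2}$ is the same quantity governing both smooth solvability of the flow and the local-diffeomorphism property, so the equivalence holds pointwise on the relevant region, and one may appeal to the preceding Lemma relating the regularity of $\F^{+}L_{d}$ and $\F^{-}L_{d}$ if an assertion over a full neighborhood is wanted.
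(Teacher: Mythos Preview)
Your proposal is correct and follows essentially the same approach as the paper: reduce to the momentum-matching form $\F^{+}L_{d}(q_{0},q_{1},z_{0})=\F^{-}L_{d}(q_{1},q_{2},z_{1})$ and observe that solvability for $(q_{2},z_{1})$ is governed by local invertibility of $\F^{-}L_{d}$. Your treatment of the necessity direction (the Jacobian block argument and the smoothness caveat) is considerably more careful than the paper's, which simply asserts that unique determination of $(q_{2},z_{1})$ as a function of $(q_{0},q_{1},z_{0})$ is equivalent to $\F^{-}L_{d}$ being a local isomorphism.
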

\begin{proof}
	Given $(q_0, q_1, z_0)\in Q\times Q\times \R$, from the definition
	$\Phi_{d}=(\F^{-}L_{d})^{-1}\circ  \F^{+}L_{d}$ it is easy to show that 
 the point $(q_2, z_1)\in  Q\times \R$ required to satisfy
	\[
	 \Phi_d (q_0, q_1, z_0) =(q_1, q_2, z_1)
	 \]
is defined by Equations  (\ref{DHE}), and so $q_2$ and $z_1$  is uniquely defined as a
	function of $(q_0, q_1, z_0)$ if and only if $\F^{-}L_{d}$ is locally an isomorphism.
	\end{proof}

Inspired by the last theorem, we say that a discrete contact Lagrangian is \textit{regular} if $1+D_{z}L_{d}(q_0,q_1,z_0)$ does not vanish and its negative discrete Legendre transform $\F^{-}L_{d}$ is a local diffeomorphism.

The discrete Legendre transforms also allow us to define an associated \textit{discrete Hamiltonian flow} on $T^{*}Q\times \R$. 
Indeed, considering a regular discrete Lagrangian function $L_{d}$, let $\widetilde{\Phi_{d}}:T^{*}Q\times \R\rightarrow T^{*}Q\times \R$ be defined by
\begin{equation}
\widetilde{\Phi_{d}}=\F^{+}L_{d}\circ \Phi_{d} \circ (\F^{+}L_{d})^{-1}.
\end{equation}
It is not difficult to show that the discrete Hamiltonian flow admits the alternative expressions
\begin{equation}
\widetilde{\Phi_{d}}=\F^{-}L_{d}\circ \Phi_{d} \circ (\F^{-}L_{d})^{-1} 
\qquad \hbox{or}\qquad 
\widetilde{\Phi_{d}}=\F^{+}L_{d}\circ (\F^{-}L_{d})^{-1}.
\end{equation}

\begin{equation}
\begin{tikzcd}[column sep=1]
                                                            & Q \times Q \times \mathbb{R} \arrow[rr, "\Phi_{d}"] \arrow[ld, "\mathbb{F}^{-}L_{d}"] \arrow[rd, "\mathbb{F}^{+}L_{d}"] &                                                             & Q \times Q \times \mathbb{R} \arrow[ld, "\mathbb{F}^{-}L_{d}"] \arrow[rd, "\mathbb{F}^{+}L_{d}"] &                         \\
T^{*}Q\times \mathbb{R} \arrow[rr, "\widetilde{\Phi_{d}}"'] &                                                                                                                         & T^{*}Q\times \mathbb{R} \arrow[rr, "\widetilde{\Phi_{d}}"'] &                                                                                                  & T^{*}Q\times \mathbb{R}
\end{tikzcd}
\end{equation}

We may define the one-forms

\begin{equation}
	\eta^{+}=(\F^{+}L_{d})^{*}\eta, \quad \eta^{-}=(\F^{-}L_{d})^{*}\eta,
\end{equation}
where $\eta$ is the canonical contact form on $T^{*}Q\times \R$. These are contact forms on $Q\times Q \times \R$. If we chose natural coordinates $(q^{i},p_{i},z)$ on $T^{*}Q\times \R$ where $\eta=\dd z-p_{i}\dd q^{i}$, the discrete 1-forms may be locally written as the pullback
\begin{equation}
\begin{split}
\eta^{+}&=\dd z_{0}+\dd L_d(q_{0},q_{1},z_{0})-D_{2}L_{d}(q_{0},q_{1},z_{0})\dd q_{1}, \\ \eta^{-}&=\dd z_{0}+\frac{D_{1}L_{d}(q_0,q_1,z_0)}{1+D_{z}L_{d}(q_0,q_1,z_0)}\dd q_{0},
\end{split}
\end{equation}
by the corresponding discrete Legendre transform. The one-form $\eta^{+}$ is further simplified to

\begin{equation}\label{localetaplus}
\eta^{+}=(1+D_{z}L_{d}(q_{0},q_{1},z_{0}))\dd z_{0}+D_{1}L_{d}(q_{0},q_{1},z_{0})\dd q_{0}.
\end{equation}

Given a discrete Lagrangian $L_{d}$, let $\sigma_{d}:Q\times Q\times \R\rightarrow \R$ be the smooth function given by
\begin{equation*}
	\sigma_{d}(q_0,q_1,z_0)=1+D_{z}L_{d}(q_0,q_1,z_0)
\end{equation*}
then we have that:

\begin{lemma}\label{last-lemma}
The discrete contact forms $\eta^{\pm}$ satisfy
\begin{itemize}
\item[(i)] $\eta^{+}=\sigma_{d} \cdot \eta^{-}$;
\item[(ii)] $(\Phi_{d})^{*}\eta^{-}=\eta^{+}$.
\end{itemize}
\end{lemma}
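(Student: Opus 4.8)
The plan is to treat the two claims separately: (i) is a one-line algebraic comparison of the local coordinate expressions already obtained for $\eta^{+}$ and $\eta^{-}$, while (ii) will follow formally from the definition of $\Phi_{d}$ as a composition of discrete Legendre transforms, needing no coordinates at all.

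For (i), I would start from the simplified local form of $\eta^{+}$ recorded in equation~\eqref{localetaplus},
\[
\eta^{+}=(1+D_{z}L_{d})\,\dd z_{0}+D_{1}L_{d}\,\dd q_{0},
\]
together with the local form $\eta^{-}=\dd z_{0}+\frac{D_{1}L_{d}}{1+D_{z}L_{d}}\,\dd q_{0}$. Factoring $\sigma_{d}=1+D_{z}L_{d}$ out of the first expression gives $\eta^{+}=\sigma_{d}\bigl(\dd z_{0}+\frac{D_{1}L_{d}}{1+D_{z}L_{d}}\,\dd q_{0}\bigr)=\sigma_{d}\cdot\eta^{-}$, which is exactly the claim. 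This step is legitimate precisely where $1+D_{z}L_{d}$ is non-vanishing, which is part of the regularity hypothesis under which $\eta^{\pm}$ are defined in the first place.

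For (ii), I would avoid local coordinates altogether and use functoriality of the pullback together with the identity $\Phi_{d}=(\F^{-}L_{d})^{-1}\circ\F^{+}L_{d}$ from Theorem~\ref{3.7}. Since $\eta^{-}=(\F^{-}L_{d})^{*}\eta$, we have
\[
(\Phi_{d})^{*}\eta^{-}=(\Phi_{d})^{*}(\F^{-}L_{d})^{*}\eta=(\F^{-}L_{d}\circ\Phi_{d})^{*}\eta,
\]
and the composition collapses, $\F^{-}L_{d}\circ\Phi_{d}=\F^{-}L_{d}\circ(\F^{-}L_{d})^{-1}\circ\F^{+}L_{d}=\F^{+}L_{d}$, so that $(\Phi_{d})^{*}\eta^{-}=(\F^{+}L_{d})^{*}\eta=\eta^{+}$.

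I do not expect a genuine obstacle here: part (i) is a single factorization and part (ii) is purely formal. The only points that require care are checking that $\sigma_{d}$ is invertible, so that both the factorization in (i) and the inverse $(\F^{-}L_{d})^{-1}$ in (ii) are meaningful — this is guaranteed by the regularity assumption under which $\Phi_{d}$ exists — and keeping the source and target copies of $Q\times Q\times\R$ distinct when forming the pullback in (ii), so that $\eta^{-}$ on the target is correctly transported back by $\Phi_{d}$. It is also worth noting that (i) and (ii) together express that the two Legendre transforms differ by the conformal factor $\sigma_{d}$ and that $\Phi_{d}$ is a conformal contactomorphism for $\eta^{-}$, which is the geometric content motivating the lemma.
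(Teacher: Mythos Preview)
Your proposal is correct and follows essentially the same approach as the paper: part (i) is obtained by factoring $\sigma_{d}$ out of the local expression~\eqref{localetaplus} to recover $\eta^{-}$, and part (ii) is the same functorial pullback computation using $\F^{-}L_{d}\circ\Phi_{d}=\F^{+}L_{d}$ from Theorem~\ref{3.7}. Your write-up is slightly more explicit about the regularity hypothesis and the bookkeeping, but the argument is the same.
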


\begin{proof}
	For the first item, observe that \eqref{localetaplus} is equivalent to
	\begin{equation*}
	\eta^{+}=(1+D_{z}L_{d}(q_{0},q_{1},z_{0}))\eta^{-}.
	\end{equation*}
	For the second one, note that
	\begin{equation*}
	(\Phi_{d})^{*}\eta^{-}=	(\Phi_{d})^{*}\circ (\F^{-}L_{d})^{*}\eta=(\F^{-}L_{d}\circ \Phi_{d})^{*}\eta=(\F^{+}L_{d})^{*}\eta
	\end{equation*}
	by applying Theorem \ref{3.7}.
\end{proof}

As a consequence of the last Lemma we have the following theorem:

\begin{theorem}
Let $L_{d}$ be a regular discrete Lagrangian function. The discrete flow $\Phi_{d}$ associated to $L_{d}$ is a conformal contactomorphism with respect to both contact structures $\eta^{\pm}$. In particular, it satisfies 
	\begin{equation}
		(\Phi_{d})^{*}\eta^{+}=(\sigma_{d} \circ \Phi_{d})\cdot \eta^{+}, \quad (\Phi_{d})^{*}\eta^{-}=\sigma_{d} \cdot \eta^{-}
	\end{equation}
Likewise, the discrete Hamiltonian flow $\widetilde{\Phi_{d}}$ is also a conformal contactomorphism satisfying
\begin{equation}
	(\widetilde{\Phi_{d}})^{*}\eta=(\sigma_{d} \circ (\F^{-}L_{d})^{-1}) \cdot \eta.
\end{equation}
\end{theorem}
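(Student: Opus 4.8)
The plan is to derive all three identities purely formally from Lemma \ref{last-lemma}, so that no fresh coordinate computation is needed: the entire content of the theorem is already packaged in the two relations $\eta^{+}=\sigma_{d}\cdot\eta^{-}$ (item (i)) and $(\Phi_{d})^{*}\eta^{-}=\eta^{+}$ (item (ii)). Everything else is bookkeeping with pullbacks. Throughout I would note that $\Phi_{d}$ is a local diffeomorphism (it equals $(\F^{-}L_{d})^{-1}\circ\F^{+}L_{d}$ by Theorem \ref{3.7}), that $\eta^{\pm}$ are genuine contact forms, and that $\sigma_{d}=1+D_{z}L_{d}$ is nonvanishing by regularity, so that the conformal factors appearing below are indeed $\R\setminus\set{0}$-valued and the term \emph{conformal contactomorphism} applies.

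First I would establish the $\eta^{-}$ statement. Combining item (ii), which reads $(\Phi_{d})^{*}\eta^{-}=\eta^{+}$, with item (i), $\eta^{+}=\sigma_{d}\cdot\eta^{-}$, gives $(\Phi_{d})^{*}\eta^{-}=\sigma_{d}\cdot\eta^{-}$ at once. For the $\eta^{+}$ statement I would instead start from $\eta^{+}=\sigma_{d}\cdot\eta^{-}$ and pull back by $\Phi_{d}$; since pullback distributes over multiplication by a function via precomposition, $(\Phi_{d})^{*}\eta^{+}=(\sigma_{d}\circ\Phi_{d})\cdot(\Phi_{d})^{*}\eta^{-}$, and substituting $(\Phi_{d})^{*}\eta^{-}=\eta^{+}$ from item (ii) yields $(\Phi_{d})^{*}\eta^{+}=(\sigma_{d}\circ\Phi_{d})\cdot\eta^{+}$. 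This exhibits $\Phi_{d}$ as a conformal contactomorphism for both structures, with conformal factors $\sigma_{d}$ and $\sigma_{d}\circ\Phi_{d}$ respectively.

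Finally, for the Hamiltonian flow I would conjugate through the negative Legendre transform. Writing $\widetilde{\Phi_{d}}=\F^{-}L_{d}\circ\Phi_{d}\circ(\F^{-}L_{d})^{-1}$ and using contravariant functoriality of the pullback, $(\widetilde{\Phi_{d}})^{*}=((\F^{-}L_{d})^{-1})^{*}\circ(\Phi_{d})^{*}\circ(\F^{-}L_{d})^{*}$. Applying this to $\eta$, I would use $(\F^{-}L_{d})^{*}\eta=\eta^{-}$, then the identity $(\Phi_{d})^{*}\eta^{-}=\sigma_{d}\cdot\eta^{-}$ just proved, leaving $((\F^{-}L_{d})^{-1})^{*}(\sigma_{d}\cdot\eta^{-})$. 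Pulling the scalar out contributes the factor $\sigma_{d}\circ(\F^{-}L_{d})^{-1}$, and since $((\F^{-}L_{d})^{-1})^{*}\eta^{-}=((\F^{-}L_{d})^{-1})^{*}(\F^{-}L_{d})^{*}\eta=\eta$ (the two Legendre pullbacks cancel to $\Id^{*}$), the result $(\widetilde{\Phi_{d}})^{*}\eta=(\sigma_{d}\circ(\F^{-}L_{d})^{-1})\cdot\eta$ follows.

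There is no genuine analytic obstacle here; the proof is formal manipulation. The only place demanding care is tracking the composition inside the conformal factor: one must apply $(\Phi_{d})^{*}(\sigma_{d}\cdot\eta^{-})=(\sigma_{d}\circ\Phi_{d})\cdot(\Phi_{d})^{*}\eta^{-}$ correctly, so that the function pulls back by precomposition while the form pulls back separately, and one must keep the order reversal straight when conjugating by $\F^{-}L_{d}$ in the Hamiltonian case. These are precisely the spots where a spurious or missing composition could creep in, so I would verify at the end that the factors read $\sigma_{d}\circ\Phi_{d}$ and $\sigma_{d}\circ(\F^{-}L_{d})^{-1}$ rather than a bare $\sigma_{d}$.
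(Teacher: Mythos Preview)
Your proof is correct and follows essentially the same approach as the paper: both derive everything from Lemma~\ref{last-lemma} by formal pullback manipulations. The only difference is cosmetic: for the Hamiltonian flow you conjugate via $\F^{-}L_{d}$ (using the alternative expression $\widetilde{\Phi_{d}}=\F^{-}L_{d}\circ\Phi_{d}\circ(\F^{-}L_{d})^{-1}$), whereas the paper conjugates via $\F^{+}L_{d}$ and then invokes $\Phi_{d}\circ(\F^{+}L_{d})^{-1}=(\F^{-}L_{d})^{-1}$ to identify the conformal factor---your route is marginally more direct since the factor $\sigma_{d}\circ(\F^{-}L_{d})^{-1}$ drops out immediately.
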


\begin{proof}
	These are trivial consequences of Lemma \ref{last-lemma}. Combining the two statements of the Lemma we get
	\begin{equation*}
		(\Phi_{d})^{*}\eta^{-}=\sigma_{d} \cdot \eta^{-}.
	\end{equation*}
	Then, also
	\begin{equation*}
		(\Phi_{d})^{*}\eta^{+}=(\Phi_{d})^{*}(\sigma_{d} \cdot \eta^{-})=(\sigma_{d} \circ \Phi_{d}) \cdot (\Phi_{d})^{*}\eta^{-}=(\sigma_{d} \circ \Phi_{d}) \cdot \eta^{+}.
	\end{equation*}
	Observing that the discrete Hamiltonian flow satisfies $\widetilde{\Phi_{d}}=\F^{+}L_{d}\circ \Phi_{d} \circ (\F^{+}L_{d})^{-1}$ by definition, then
	\begin{equation*}
		\begin{split}
			(\widetilde{\Phi_{d}})^{*}\eta = & ((\F^{+}L_{d})^{-1})^{*} \circ (\Phi_{d})^{*} \eta^{+}=((\F^{+}L_{d})^{-1})^{*}((\sigma_{d} \circ \Phi_{d}) \cdot \eta^{+}) \\
			= & (\sigma_{d} \circ \Phi_{d}\circ (\F^{+}L_{d})^{-1}) \cdot ((\F^{+}L_{d})^{-1})^{*}\eta^{+},
		\end{split}
	\end{equation*}
	where the last equality comes from the properties of the pullback. Since we have that
	\begin{equation*}
		\Phi_{d}\circ (\F^{+}L_{d})^{-1}=(\F^{-}L_{d})^{-1} \quad \text{and} \quad ((\F^{+}L_{d})^{-1})^{*}\eta^{+}=\eta,
	\end{equation*}
	the desired result follows.
	
	Moreover, since the discrete Lagrangian function $L_{d}$ is regular, the function $\sigma_{d}$ does not vanish. Hence, the discrete flows $\Phi_{d}$ and $\widetilde{\Phi_{d}}$ are conformal contact.
\end{proof}

\subsection{Discrete symmetries and dissipated quantities}

Let $G$ be a Lie group acting on $Q$ through the map $\Phi:G\times Q\rightarrow Q$. We define the lifted action on $Q\times Q \times \R$ to be the diagonal action on $Q\times Q$ and the identity on $\R$, so that
\begin{equation*}
    \widetilde{\Phi}:G\times Q\times Q\times \R\rightarrow Q\times Q\times \R, \quad \widetilde{\Phi}_{g}(q_{0},q_{1},z_{0})=(\Phi_{g}(q_{0}),\Phi_{g}(q_{1}),z_{0}).
\end{equation*}
Let us denote by $\xi_{Q}\in\mathfrak{X}(Q)$ the infinitesimal generator associated to a Lie algebra element $\xi\in {\mathfrak g}$  and by $\widetilde{\xi}\in\mathfrak{X}(Q\times Q\times \R)$ the corresponding infinitesimal generator on $Q\times Q\times \R$.

Notice that, since $\text{pr}_{3}(\Phi_{g}(q_{0},q_{1},z_{0}))=z_{0}$ is constant for all $g\in G$, where $\text{pr}_{3}:Q\times Q\times \R\rightarrow \R$ is the projection onto the third factor, then we have that
\begin{equation*}
    T_{(q_{0},q_{1},z_{0})}\text{pr}_{3}(\widetilde{\xi}(q_{0},q_{1},z_{0}))=0.
\end{equation*}
In fact, the infinitesimal generator may be identified with
\begin{equation}\label{infinitesimalidentification}
    \widetilde{\xi}(q_{0},q_{1},z_{0})=(\xi_{Q}(q_{0}),\xi_{Q}(q_{1}),0_{z_{0}})\in T_{q_{0}}Q\times T_{q_{1}}Q\times T_{z_{0}}\R,
\end{equation}
where $0:\R\rightarrow T\R$ is the zero section of $T\R$.

\begin{lemma}
If $L_{d}:Q\times Q\times \R\rightarrow \R$ is an invariant discrete Lagrangian function, i.e., $L_{d}\circ\widetilde{\Phi}_{g}=L_{d}$ for all $g\in G$, then it satisfies the equation
    \begin{equation}\label{symmetryformula}
        D_{1}L_{d}(q_{0},q_{1},z_{0})\xi_{Q}(q_{0})+D_{2}L_{d}(q_{0},q_{1},z_{0})\xi_{Q}(q_{1})=0.
    \end{equation}
\end{lemma}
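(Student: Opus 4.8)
The plan is to derive equation~\eqref{symmetryformula} as the infinitesimal version of the global invariance hypothesis $L_{d}\circ\widetilde{\Phi}_{g}=L_{d}$. First I would fix an arbitrary Lie algebra element $\xi\in\mathfrak{g}$ and restrict attention to the one-parameter subgroup $g(t)=\exp(t\xi)$, so that $t\mapsto\widetilde{\Phi}_{g(t)}$ is precisely the flow of the infinitesimal generator $\widetilde{\xi}$ on $Q\times Q\times\R$. Composing the invariance identity with this subgroup gives $L_{d}(\widetilde{\Phi}_{g(t)}(q_{0},q_{1},z_{0}))=L_{d}(q_{0},q_{1},z_{0})$ for every $t$; since the right-hand side is independent of $t$, differentiating at $t=0$ yields the infinitesimal invariance condition $\liedv{\widetilde{\xi}}L_{d}=\widetilde{\xi}(L_{d})=0$ at the (arbitrary) point $(q_{0},q_{1},z_{0})$.

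The second step is to expand $\widetilde{\xi}(L_{d})$ in coordinates using the explicit description of the infinitesimal generator recorded in~\eqref{infinitesimalidentification}, namely $\widetilde{\xi}(q_{0},q_{1},z_{0})=(\xi_{Q}(q_{0}),\xi_{Q}(q_{1}),0_{z_{0}})$. Applying the chain rule to $L_{d}$ regarded as a function of its three arguments, I would write
\[
\widetilde{\xi}(L_{d})=D_{1}L_{d}(q_{0},q_{1},z_{0})\,\xi_{Q}(q_{0})+D_{2}L_{d}(q_{0},q_{1},z_{0})\,\xi_{Q}(q_{1})+D_{z}L_{d}(q_{0},q_{1},z_{0})\cdot 0.
\]
Setting this equal to zero and discarding the vanishing third term produces exactly~\eqref{symmetryformula}, as desired.

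There is no substantive obstacle here: the argument is a direct infinitesimal-invariance computation, passing from the finite symmetry to its derivative along a one-parameter subgroup. The only point requiring genuine care — and the reason the conclusion takes the clean two-term form — is the fact, established immediately before the lemma, that the lifted action is the identity on the $\R$-factor. This forces the $z_{0}$-component of $\widetilde{\xi}$ to vanish, so that the $D_{z}L_{d}$ contribution drops out; had the action moved the contact coordinate, a surviving third term would appear on the left-hand side of~\eqref{symmetryformula}.
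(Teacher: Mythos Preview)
Your proposal is correct and follows essentially the same approach as the paper: differentiate the invariance hypothesis to obtain $\langle \dd L_{d},\widetilde{\xi}\rangle=0$, then use the explicit form~\eqref{infinitesimalidentification} of $\widetilde{\xi}$ to see that the $D_{z}L_{d}$ term drops out. The paper's proof is terser (it states the infinitesimal invariance condition directly rather than deriving it via a one-parameter subgroup), but the logic is identical.
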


\begin{proof}
    Since the discrete Lagrangian function is invariant for the lifted action, it satisfies
    \begin{equation*}
        \langle \dd L_{d}(q_{0},q_{1},z_{0}), \widetilde{\xi}(q_{0},q_{1},z_{0})\rangle=0, \ \forall  (q_{0},q_{1},z_{0})\in Q\times Q\times \R.
    \end{equation*}
    Then using equation \eqref{infinitesimalidentification}, one immediately gets the desired expression.
\end{proof}

Now consider the discrete momentum map $J_{d}$ given by
\begin{equation}
    \begin{aligned}
        J_{d}: Q\times Q \times \R &\to \mathfrak{g}^*,\\
        \langle J_{d}(q_{0},q_{1},z_{0}), \xi\rangle & = \langle \eta^{-}, \widetilde{\xi}(q_{0},q_{1},z_{0})\rangle .   
    \end{aligned}
\end{equation}

\begin{theorem}
 Let $L_{d}$ be an invariant discrete Lagrangian function for the lifted action $\widetilde{\Phi}$. Then $\widetilde{\Phi}$ acts by contactomorphisms on $Q\times Q\times \R$ and the function $\hat{J}_{d}(\xi):Q\times Q\times \R\rightarrow \R$ given by
 \begin{equation*}
     \hat{J}_{d}(\xi)(q_{0},q_{1},z_{0})=\langle J_{d}(q_{0},q_{1},z_{0}), \xi\rangle
 \end{equation*}
 is dissipated along the discrete flow of Herglotz equations in the sense that
 \begin{equation*}
     \hat{J}_{d}(\xi)(\Phi_{d} (q_{0},q_{1},z_{0}))= \sigma_{d}(q_{0},q_{1},z_{0}) \hat{J}_{d}(\xi)(q_{0},q_{1},z_{0}),
 \end{equation*}
 where $\sigma_{d}(q_{0},q_{1},z_{0})=1+D_{z}L_{d}(q_{0},q_{1},z_{0})$.
\end{theorem}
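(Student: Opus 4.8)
The plan is to derive both assertions from the equivariance of the discrete Legendre transforms and then read off the dissipation identity directly from Lemma~\ref{last-lemma}. The conceptual point is that invariance of $L_d$ makes $\F^{\pm}L_d$ intertwine the lifted action $\widetilde{\Phi}_g$ on $Q\times Q\times\R$ with the cotangent-lifted action on $T^*Q\times\R$, and the latter manifestly preserves the canonical contact form.

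First I would establish equivariance of the Legendre transforms. Differentiating the invariance relation $L_d(\Phi_g q_0,\Phi_g q_1,z_0)=L_d(q_0,q_1,z_0)$ with respect to the $Q$-arguments shows that the covectors $D_1 L_d$ and $D_2 L_d$ transform under the cotangent lift $\Phi_g^{T^*}$ of $\Phi_g$, while differentiating with respect to $z_0$ (which is fixed by the action) gives $D_z L_d\circ\widetilde{\Phi}_g=D_z L_d$, hence $\sigma_d\circ\widetilde{\Phi}_g=\sigma_d$. Comparing with the coordinate formulas for $\F^{\pm}L_d$ yields the equivariance
\[
\F^{\pm}L_d\circ\widetilde{\Phi}_g=\Psi_g\circ\F^{\pm}L_d,\qquad \Psi_g:=\Phi_g^{T^*}\times\Id_{\R}.
\]
Since $\Phi_g^{T^*}$ preserves the tautological one-form $\theta_Q$ and $\Psi_g$ fixes the $\R$-coordinate, one has $\Psi_g^*\eta=\eta$, and therefore
\[
\widetilde{\Phi}_g^*\eta^{-}=\widetilde{\Phi}_g^*(\F^{-}L_d)^*\eta=(\F^{-}L_d\circ\widetilde{\Phi}_g)^*\eta=(\Psi_g\circ\F^{-}L_d)^*\eta=(\F^{-}L_d)^*\Psi_g^*\eta=(\F^{-}L_d)^*\eta=\eta^{-}.
\]
Because $\sigma_d$ is $\widetilde{\Phi}_g$-invariant and $\eta^{+}=\sigma_d\,\eta^{-}$ by Lemma~\ref{last-lemma}(i), the same computation gives $\widetilde{\Phi}_g^*\eta^{+}=\eta^{+}$; thus $\widetilde{\Phi}$ acts by contactomorphisms for both discrete contact structures.

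Next I would transfer equivariance to the discrete flow. Using $\Phi_d=(\F^{-}L_d)^{-1}\circ\F^{+}L_d$ together with the equivariance of $\F^{\pm}L_d$ gives $\Phi_d\circ\widetilde{\Phi}_g=\widetilde{\Phi}_g\circ\Phi_d$. Setting $g=\exp(t\xi)$ and differentiating at $t=0$ then yields $T\Phi_d\circ\widetilde{\xi}=\widetilde{\xi}\circ\Phi_d$, that is, $\widetilde{\xi}$ is $\Phi_d$-related to itself. Writing $P=(q_0,q_1,z_0)$, this relatedness and the definition of the pullback give
\[
\hat{J}_d(\xi)(\Phi_d(P))=\langle\eta^{-}_{\Phi_d(P)},\widetilde{\xi}(\Phi_d(P))\rangle=\langle\eta^{-}_{\Phi_d(P)},T_P\Phi_d\,\widetilde{\xi}(P)\rangle=\langle(\Phi_d^*\eta^{-})_P,\widetilde{\xi}(P)\rangle.
\]
By Lemma~\ref{last-lemma} we have $\Phi_d^*\eta^{-}=\eta^{+}=\sigma_d\,\eta^{-}$, so the right-hand side equals $\sigma_d(P)\,\langle\eta^{-}_P,\widetilde{\xi}(P)\rangle=\sigma_d(P)\,\hat{J}_d(\xi)(P)$, which is exactly the claimed dissipation law.

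I expect the main obstacle to be the first step: correctly matching the differentiated invariance relation to the cotangent-lifted action $\Phi_g^{T^*}$ and keeping track of the division by $\sigma_d=1+D_zL_d$ appearing in $\F^{-}L_d$. (One could instead verify the intertwining for $\F^{+}L_d$, where no denominator appears, and then invoke the lemma that $\F^{+}L_d$ and $\F^{-}L_d$ are simultaneously local diffeomorphisms, or use the symmetry identity~\eqref{symmetryformula} directly.) Once equivariance of $\F^{\pm}L_d$ is secured, the contactomorphism property, the $\Phi_d$-relatedness of $\widetilde{\xi}$, and the final dissipation computation are all formal consequences of the pullback identities and Lemma~\ref{last-lemma}.
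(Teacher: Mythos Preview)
Your argument is correct, but it takes a genuinely different route from the paper's for the dissipation part. The paper proves the identity by a direct coordinate computation: it writes $\hat{J}_d(\xi)(P_1)=\tfrac{1}{\sigma_d(P_1)}\langle D_1L_d(P_1),\xi_Q(q_1)\rangle$, replaces $D_1L_d(P_1)$ via the discrete Herglotz equations by $-\sigma_d(P_1)\,D_2L_d(P_0)$, then uses the infinitesimal symmetry identity~\eqref{symmetryformula} to swap $-D_2L_d(P_0)\,\xi_Q(q_1)$ for $D_1L_d(P_0)\,\xi_Q(q_0)$, and finally reinserts the factor $\sigma_d(P_0)$ to recognize $\hat{J}_d(\xi)(P_0)$. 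In other words, the paper never invokes $\Phi_d$-equivariance of $\widetilde{\xi}$ or the pullback identity $\Phi_d^*\eta^{-}=\sigma_d\,\eta^{-}$; it uses Herglotz equations and the symmetry lemma directly.

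Your approach is more structural: you package the invariance of $L_d$ as equivariance of $\F^{\pm}L_d$, deduce that $\widetilde{\xi}$ is $\Phi_d$-related to itself, and then read off the dissipation law in one line from Lemma~\ref{last-lemma}. This has the advantage of making transparent that the result is nothing but ``conformal contactomorphism plus invariant momentum map'', and it would carry over verbatim to any setting where the analogue of Lemma~\ref{last-lemma} holds. The paper's computation, on the other hand, is self-contained (it does not require first establishing equivariance of $\Phi_d$) and shows explicitly where the Herglotz matching condition and the symmetry identity~\eqref{symmetryformula} enter. For the contactomorphism assertion the two proofs agree in spirit: the paper simply states $\widetilde{\Phi}_g^*\eta^{\pm}=\eta^{\pm}$ as a consequence of $G$-invariance, while you spell this out via the equivariance of the Legendre transforms.
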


\begin{proof}
    The fact that $\widetilde{\Phi}$ acts by contactomorphisms is immediately checked by computing the pullback of either the 1-forms $\eta^{\pm}$:
    \begin{equation*}
        (\widetilde{\Phi}_{g})^{*}\eta^{\pm}=
        %(\F^{\pm} L_{d}\circ \widetilde{\Phi}_{g})^{*}\eta_{Q}=(\F^{\pm} L_{d})^{*}\eta_{Q}=
        \eta^{\pm}.
    \end{equation*}
as a direct consequence of the $G$-invariance of $L_d$ (see Subsection 1.3.3 in \cite{marsden-west}).

In order to simplify the notation, let $P_{0}=(q_{0},q_{1},z_{0})$ and $P_{1}=\Phi_{d} (q_{0},q_{1},z_{0})$. By definition we have that
    \begin{equation*}
        \hat{J}_{d}(\xi)(P_{1})=\langle \eta^{-}(P_{1}), \widetilde{\xi}(P_{1})\rangle.
    \end{equation*}
Now applying the definition of $\eta^{-}$ and equation \eqref{infinitesimalidentification} we get
    \begin{equation*}
        \hat{J}_{d}(\xi)(P_{1})=\frac{1}{\sigma_{d}(P_{1})}\langle D_{1}L_{d}(P_{1}), \xi_{Q}(q_{1})\rangle.
    \end{equation*}
Using the discrete Herglotz equations, the right-hand side reduces to
    \begin{equation*}
        \hat{J}_{d}(\xi)(P_{1})=-\langle D_{2}L_{d}(P_{0}),\xi_{Q}(q_{1})\rangle.
    \end{equation*}
From the infinitesimal symmetry formula in equation \eqref{symmetryformula}, we deduce
   \begin{equation*}
        \hat{J}_{d}(\xi)(P_{1})=\langle D_{1}L_{d}(P_{0}), \xi_{Q}(q_{0})\rangle .
    \end{equation*}
Now inserting $\sigma_{d}(P_{0})$ so that
    \begin{equation*}
        \hat{J}_{d}(\xi)(P_{1})=\sigma_{d}(P_{0}) \langle \frac{D_{1}L_{d}(P_{0})}{\sigma_{d}(P_{0})},\xi_{Q}(q_{0})\rangle,
    \end{equation*}
    we deduce
    \begin{equation*}
        \hat{J}_{d}(\xi)(P_{1})=\sigma_{d}(P_{0})\langle  \eta^{-}(P_{0})(\widetilde{\xi}(P_{0})\rangle
    \end{equation*}
    and so we have proved that
    \begin{equation*}
        \hat{J}_{d}(\xi)(P_{1})=\sigma_{d}(P_{0}) \hat{J}_{d}(\xi)(P_{0}).
    \end{equation*}
\end{proof}

%%%%%%%%%%%%%%%%%%%%%%%%%%%%%%%%%%%%%%%%%%%%%%%%%%%%%%%%%%%%%%%%%%%%%%%%%%%%%%%%%%%%%%%%%%%%%%%%%%%%%%%%%%%%%%%%%%%%%%%%%%%%%%%%%%%%%%%%%%%%%%%%%%%%%%%%%%%%%%%%%%%%%%%%%%%%%%%%%%%%%%%%%%%%%%%%%%%%%%%%%%

\section{Exact discrete Lagrangian for contact systems}

\subsection{The contact exponential map}

Given a contact regular Lagrangian $L:TQ\times \R\rightarrow\R$, consider the corresponding Lagrangian vector field $\xi_{L}$ and denote its flow by $\phi_{t}^{\xi_{L}}$.

Define the open subset $U_{h}$ of $TQ\times \R$ given by
\begin{equation*}
    U_{h}=\{(q_{0},\dot{q}_{0},z_{0})\in TQ \times \R \ | \ \phi_{t}^{\xi_{L}} \ \text{is defined for} \ t\in [0,h]\}
\end{equation*}
and let the \textit{contact exponential map} be defined by
\begin{equation}
	\begin{split}
		\text{exp}_{h}^{\xi_{L}}: \ & U_{h} \subseteq TQ \times \R \rightarrow Q \times Q \times \R \\
						& (q_{0},\dot{q}_{0},z_{0})\mapsto (q_{0},q_{1},z_{0}),
	\end{split}
\end{equation}
where $q_{1}=p_{Q}\circ \phi_{h}^{\xi_{L}}(q_{0},\dot{q}_{0},z_{0})$ and $p_{Q}:TQ\times \R \rightarrow Q$ is the projection onto $Q$ given by $p_{Q}(v_{q},z)=q$ for $v_{q}\in T_{q} Q$.

We will prove that the contact exponential map is a local diffeomorphism, using the fact that the non-holonomic exponential map, i.e., the exponential map of a non-holonomic system is a local embedding (see \cite{AMM,MMdDM2016}). In fact, to every regular contact system, one can associate a non-holonomic Lagrangian system on $T(Q\times \R)$ with nonlinear constraints.

Consider the singular Lagrangian function
\begin{equation}
    \widetilde{L}:T(Q\times \R) \rightarrow \R, \quad \widetilde{L}=L\circ \pi,
\end{equation}
where $\pi:T(Q\times \R) \rightarrow TQ\times \R$ is a projection onto $TQ\times \R$. Also, we take the non-linear constraints
\begin{equation}
    M_{L}=\{ (q,z,\dot{q},\dot{z})\in T(Q\times \R) \ | \ \dot{z}=L(q,\dot{q},z) \}.
\end{equation}
Observe that  $M_{L}$ is the zero level set of the real-valued function $\Phi:T(Q\times \R) \rightarrow \R$ given by $\Phi(q,z,\dot{q},\dot{z})=\dot{z}-L(q,\dot{q},z)$.

The pair $(\widetilde{L},M_{L})$ forms a Lagrangian non-holonomic system with non-linear constraints determined by the submanifold $M_{L}$ and dynamics given by Chetaev's principle (see \cite{Bloch,LMdD1996} and references therein). According to this principle the equations of motion are
\begin{equation}\label{nhcontact}
    \begin{split}
        & \frac{\dd}{\dd t}\left( \frac{\partial \widetilde{L}}{\partial \dot{q}^{i}} \right) - \frac{\partial \widetilde{L}}{\partial q^{i}}=\lambda\frac{\partial \Phi}{\partial \dot{q}^{i}} \\
        & \frac{\dd}{\dd t}\left( \frac{\partial \widetilde{L}}{\partial \dot{z}} \right) - \frac{\partial \widetilde{L}}{\partial z}=\lambda\frac{\partial \Phi}{\partial \dot{z}} \\
        & \Phi(q^{i},z,\dot{q}^{i},\dot{z})=0,
    \end{split}
\end{equation}
with Lagrange multiplier $\lambda$. As $\widetilde{L}$ does not depend on $\dot{z}$ it is straightforward to check that the Lagrange multiplier is just
\begin{equation*}
    \lambda=-\frac{\partial L}{\partial z}
\end{equation*}
and that equations \eqref{nhcontact} are equivalent to the Herglotz equations for $L$.

Moreover, since $L$ is regular, we can define a SODE vector field $\Gamma_{(\widetilde{L},M_{L})}$ $\in \mathfrak{X}(M_{L})$ as the unique vector field on $M_{L}$ whose integral curves satisfy equations \eqref{nhcontact}. Hence, we deduce
\begin{equation}\label{pirelated}
    T\pi(\Gamma_{(\widetilde{L},M_{L})})=\xi_{L}\circ \pi.
\end{equation}

Let us denote the flow of the vector field $\Gamma_{(\widetilde{L},M_{L})}$ by $\phi_{t}^{\Gamma_{(\widetilde{L},M_{L})}}:M_{L}\rightarrow M_{L}$.

Consider now the submanifold of $\M_{L}$ given by
\begin{equation*}
	M_{L,h}=\{(q_{0},\dot{q}_{0},z_{0},\dot{z}_{0})\in T(Q \times \R) \ | \ \phi_{t}^{\Gamma_{(\widetilde{L},M_{L})}} \ \text{is defined for} \ t\in [0,h]\}.
\end{equation*}
We define the non-holonomic exponential map to be
\begin{equation}
	\begin{split}
		\text{exp}_{h}^{\Gamma_{(\widetilde{L},M_{L})}}: \ M_{L,h}\subseteq \ M_{L} & \longrightarrow (Q \times \R) \times (Q \times \R) \\
						(q_{0},z_{0},\dot{q}_{0},\dot{z}_{0}) & \mapsto (q_{0},z_{0},q_{1},z_{1}),
	\end{split}
\end{equation}
where $(q_{1},z_{1})=\tau_{Q\times \R}\circ \phi_{h}^{\Gamma_{(\widetilde{L},M_{L})}}(q_{0},z_{0},\dot{q}_{0},\dot{z}_{0})$, with $\tau_{Q\times \R}:T(Q\times \R) \rightarrow Q\times \R$ the tangent bundle projection.

In \cite{AMM}  the authors prove that there is an open subset $N_{h}\subseteq M_{L,h}$ such that the non-holonomic exponential map $\text{exp}_{h}^{\Gamma_{(\widetilde{L},M_{L})}}|_{N_{h}}$ is a smooth embedding and, hence, a diffeomorphism into its image, which we will denote by $M_{d}$.

\begin{theorem}
 There exists an open set $V_{h}\subseteq U_{h}$ such that the contact exponential map $\text{exp}_{h}^{\xi_{L}}|_{V_{h}}$ is a diffeomorphism.
\end{theorem}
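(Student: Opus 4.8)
The plan is to deduce the statement from the already-established fact (from \cite{AMM}) that the non-holonomic exponential map $\exp_h^{\Gamma_{(\widetilde L,M_L)}}$ is an embedding on $N_h$ with image $M_d$, by factoring the contact exponential map through it. First I would make the identification between $M_L$ and $TQ\times\R$ explicit: since on $M_L$ the fibre coordinate $\dot z$ is fixed by the constraint $\dot z=L(q,\dot q,z)$, the projection $\pi|_{M_L}\colon M_L\to TQ\times\R$ is a diffeomorphism, with inverse $j(q_0,\dot q_0,z_0)=(q_0,z_0,\dot q_0,L(q_0,\dot q_0,z_0))$. The relation \eqref{pirelated} says $\Gamma_{(\widetilde L,M_L)}$ and $\xi_L$ are $\pi$-related, so their flows satisfy $\pi\circ\phi_t^{\Gamma_{(\widetilde L,M_L)}}=\phi_t^{\xi_L}\circ\pi$ on $M_L$. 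In particular $j$ restricts to a diffeomorphism $U_h\to M_{L,h}$, and $W_h:=j^{-1}(N_h)$ is an open subset of $U_h$ on which $\exp_h^{\Gamma_{(\widetilde L,M_L)}}\circ j$ is a diffeomorphism onto $M_d$.

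Next I would record the factorization. Let $P\colon (Q\times\R)\times(Q\times\R)\to Q\times Q\times\R$ be the projection $(q_0,z_0,q_1,z_1)\mapsto(q_0,q_1,z_0)$. Using the $\pi$-relatedness of the flows, the point $q_1$ produced by $\exp_h^{\xi_L}$ coincides with the one appearing in $\exp_h^{\Gamma_{(\widetilde L,M_L)}}\circ j$ (both equal $p_Q\circ\phi_h^{\xi_L}$), while $z_0$ is carried along unchanged; hence $\exp_h^{\xi_L}=P\circ\exp_h^{\Gamma_{(\widetilde L,M_L)}}\circ j$. Since the second factor is a diffeomorphism onto $M_d$ over $W_h$, it remains only to show that $P|_{M_d}$ is a local diffeomorphism onto an open subset of $Q\times Q\times\R$ near some point. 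As $\dim M_d=\dim(Q\times Q\times\R)=2n+1$, this amounts to showing that $dP$ is injective on $TM_d$, i.e. that $M_d$ is transverse to the kernel line $\ker dP=\langle\partial/\partial z_1\rangle$.

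This transversality is the crux. Writing the parametrization of $M_d$ as $(q_0,\dot q_0,z_0)\mapsto(q_0,z_0,q_1,z_1)$, a tangent vector equal to $\partial/\partial z_1$ forces $\delta q_0=\delta z_0=0$, hence comes from a nonzero $\delta\dot q_0$ with $(\partial q_1/\partial\dot q_0)\,\delta\dot q_0=0$; conversely any such $\delta\dot q_0$ gives, by injectivity of the embedding differential, a nonzero vector $(0,0,0,(\partial z_1/\partial\dot q_0)\,\delta\dot q_0)$ proportional to $\partial/\partial z_1$. Thus $\partial/\partial z_1\in T_xM_d$ precisely when the block $\partial q_1/\partial\dot q_0$ is singular, so the required transversality is equivalent to the non-degeneracy of $\partial q_1/\partial\dot q_0$, which is in turn exactly the condition for $d\exp_h^{\xi_L}$ to be invertible.

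To produce a point where this non-degeneracy holds I would use that $\xi_L$ is a SODE: along the flow $q_1(t)=p_Q\circ\phi_t^{\xi_L}(q_0,\dot q_0,z_0)$ satisfies $q_1(0)=q_0$ and $\dot q_1(0)=\dot q_0$, so the variational (Jacobi) equation gives $\partial q_1/\partial\dot q_0=h\,\mathrm{Id}+O(h^2)$, invertible for $h$ small and, for fixed small $h$, on a suitable open set obtained from this near-diagonal analysis. At such a point $P|_{M_d}$ is a local diffeomorphism; restricting to a smaller open neighbourhood $V_h\subseteq W_h$ on which $\exp_h^{\xi_L}$ is in addition injective yields a diffeomorphism onto its image, as required. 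The main obstacle is precisely this transversality/non-degeneracy step, because the embedding property of the non-holonomic exponential only guarantees that the pair $(\partial q_1/\partial\dot q_0,\partial z_1/\partial\dot q_0)$ has full column rank, and one must rule out the rank being absorbed into the $z_1$-direction.
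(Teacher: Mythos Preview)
Your factorization $\exp_h^{\xi_L}=P\circ\exp_h^{\Gamma_{(\widetilde L,M_L)}}\circ j$ and the identification of the transversality of $M_d$ to $\ker dP=\langle\partial/\partial z_1\rangle$ as the crux match the paper exactly. The difference is in how this transversality is established.

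You argue via the SODE expansion $\partial q_1/\partial\dot q_0=h\,\mathrm{Id}+O(h^2)$. This is correct, but as you implicitly acknowledge it only gives the result for small $h$, and in fact it makes the non-holonomic detour superfluous: once you know $\partial q_1/\partial\dot q_0$ is invertible you have directly that $d\exp_h^{\xi_L}$ is an isomorphism, with no need for $M_L$ or the cited embedding result. The paper's route is different and more intrinsic. A nonzero vector in $\ker dP\cap T_xM_d$ is tangent to a curve $s\mapsto(q_0,z_0,q_1,z_1(s))$ lying in $M_d$; pulling back through the non-holonomic exponential yields a one-parameter family of solutions of the Herglotz equations all sharing the same endpoints $q_0,q_1$ and the same initial value $z_0$. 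The contact action of these curves is precisely $z_1(s)$. Since the central curve is a solution, Herglotz' principle says it is a critical point of the action among curves with fixed endpoints, hence $z_1'(0)=0$ and the tangent vector vanishes. This argument requires no smallness assumption on $h$ and is exactly the mechanism that prevents ``the rank being absorbed into the $z_1$-direction'' that you flagged as the main obstacle. In short: your proof is valid for small $h$ and is essentially the direct Jacobi-field argument; the paper's proof leverages the variational characterization of solutions to get transversality for free at every point of $M_d$.
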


\begin{proof}
Let us consider the non-holonomic system $(\widetilde{L},M_{L})$ defined previously.

According to equation \eqref{pirelated}, the vector fields $\xi_{L}$ and $\Gamma_{(\widetilde{L},M_{L})}$ are $\pi$-related therefore, its flows satisfy
\begin{equation*}
    \pi \circ \phi_{t}^{\Gamma_{(\widetilde{L},M_{L})}}=\phi_{t}^{\xi_{L}} \circ \pi.
\end{equation*}
We remark that $\pi|_{M_{L}}$ is a diffeomorphism, since $M_{L}$ is diffeomorphic to the graph of the Lagrangian function $L$. As such, we can also write
\begin{equation*}
    \phi_{t}^{\Gamma_{(\widetilde{L},M_{L})}}=(\pi|_{M_{L}})^{-1} \circ \phi_{t}^{\xi_{L}} \circ \pi|_{M_{L}}.
\end{equation*}
Thus, we can write the non-holonomic exponential map in terms of the contact dynamics in the following way
\begin{equation*}
    \text{exp}_{h}^{\Gamma_{(\widetilde{L},M_{L})}}(q_{0},z_{0},\dot{q}_{0},\dot{z}_{0})=(q_{0},z_{0},q_{1},z_{1}),
\end{equation*}
with $(q_{1},z_{1})=\tau_{Q\times \R}\circ (\pi|_{M_{L}})^{-1} \circ \phi_{h}^{\xi_{L}} \circ \pi|_{M_{L}}(q_{0},z_{0},\dot{q}_{0},\dot{z}_{0})$ where $\dot{z}_0=L(q_0, \dot{q}_0, z_0)$.

Also note that $\tau_{Q\times \R}\circ (\pi|_{M_{L}})^{-1}=p_{Q\times \R}$, where
\begin{equation*}
    p_{Q\times \R}:TQ\times \R \rightarrow Q\times \R, \quad p_{Q\times \R}(v_{q},z)=(q,z).
\end{equation*}
In Diagram~\eqref{projections} we show the different projections we can define on the manifolds involved in this section.

\begin{equation}\label{projections}
    \begin{tikzcd}
                                                                              & T(Q \times \mathbb{R}) \arrow[ld, "\pi"'] \arrow[rd, "\tau_{Q\times \mathbb{R}}"] &                                                \\
TQ\times \mathbb{R} \arrow[rr, "p_{Q\times \mathbb{R}}"] \arrow[rd, "p_{Q}"'] &                                                                                   & Q\times \mathbb{R} \arrow[ld, "\text{pr}_{1}"] \\
                                                                              & Q                                                                                 &                                               
\end{tikzcd}
\end{equation}

With these projections we can also write the contact exponential map as
\begin{equation*}
    \text{exp}_{h}^{\xi_{L}}(q_{0},\dot{q}_{0},z_{0})=(q_{0},q_{1},z_{0}),
\end{equation*}
with $q_{1}=\text{pr}_{1}\circ p_{Q\times \R} \circ \phi_{h}^{\xi_{L}} (q_{0},\dot{q}_{0},z_{0})$. Hence, we can write it as
\begin{equation}\label{exponentialrelation}
    \text{exp}_{h}^{\xi_{L}} = \widetilde{\text{pr}}_{1} \circ \text{exp}_{h}^{\Gamma_{(\widetilde{L},M_{L})}}\circ (\pi|_{M_{L}})^{-1},
\end{equation}
    with
\begin{equation*}
    \begin{split}
        \widetilde{\text{pr}}_{1}:(Q\times \R) \times (Q \times \R) & \longrightarrow Q \times Q \times \R \\
        (q_{0},z_{0},q_{1},z_{1}) & \mapsto (q_{0}, \text{pr}_{1}(q_{1},z_{1}),z_{0}).
    \end{split}
\end{equation*}

Therefore, if $\widetilde{\text{pr}}_{1}|_{M_{d}}$ is a local diffeomorphism then, by equation \eqref{exponentialrelation}, the contact exponential map $\text{exp}_{h}^{\xi_{L}}|_{V_{h}}$ is a diffeomorphism if we choose $$V_{h}=\pi|_{M_{L}}(N_{h}),$$
where $N_{h}$ is the open subset where $\text{exp}_{h}^{\Gamma_{(\widetilde{L},M_{L})}}|_{N_{h}}$ is an embedding.

We are going to prove in the next Lemma that $\widetilde{\text{pr}}_{1}|_{M_{d}}$ is a local diffeomorphism.

\end{proof}

\begin{lemma}
Using the same notation as in the previous theorem, $\widetilde{\text{pr}}_{1}|_{M_{d}}$ is a local diffeomorphism.
\end{lemma}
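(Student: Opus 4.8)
The plan is to exploit that $M_{d}$ and $Q\times Q\times\R$ have the same dimension $2n+1$ (with $n=\dim Q$, since $M_{d}$ is an embedded copy of the open set $N_{h}\subseteq M_{L}$ and $\dim M_{L}=2n+1$). Hence it suffices to prove that the differential of $\widetilde{\text{pr}}_{1}|_{M_{d}}$ is injective at each point $p\in M_{d}$, and then invoke the inverse function theorem. Since $\widetilde{\text{pr}}_{1}$ merely forgets the final $\R$-factor, the kernel of $T\widetilde{\text{pr}}_{1}$ on the ambient $(Q\times\R)\times(Q\times\R)$ is spanned by $\partial/\partial z_{1}$, so injectivity is equivalent to the transversality statement $\partial/\partial z_{1}\notin T_{p}M_{d}$.

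To describe $T_{p}M_{d}$, I would parametrize $M_{d}$ by initial data. By the identity \eqref{exponentialrelation} and the fact that $\text{exp}_{h}^{\Gamma_{(\widetilde{L},M_{L})}}|_{N_{h}}$ is an embedding, $M_{d}$ is the image of the map $(q_{0},z_{0},\dot q_{0})\mapsto(q_{0},z_{0},q_{1},z_{1})$, where $q_{1}=p_{Q}\circ\phi_{h}^{\xi_{L}}(q_{0},\dot q_{0},z_{0})$ and $z_{1}$ is the endpoint value of the $z$-coordinate along the same flow line. Writing $J=\partial q_{1}/\partial\dot q_{0}$ (an $n\times n$ block) and $K=\partial z_{1}/\partial\dot q_{0}$ (a row), the pushforwards of $\partial/\partial q_{0}^{i}$, $\partial/\partial z_{0}$, $\partial/\partial\dot q_{0}^{i}$ form a basis of $T_{p}M_{d}$. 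Inspecting the $q_{0}$- and $z_{0}$-components (which are carried only by the first two families) shows that a combination equal to $\partial/\partial z_{1}$ must use the velocity directions alone; that is, $\partial/\partial z_{1}\in T_{p}M_{d}$ can occur only if there is a nonzero $v$ with $Jv=0$ but $Kv\neq0$.

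The core of the argument rules this out using the variational relation established in the proof of the Herglotz variational principle. There, for a variation of a \emph{solution} of the Herglotz equations with $\delta q(0)=0$ and $\delta z(0)=0$, integration by parts together with the Herglotz equations themselves annihilates the interior integral, leaving the boundary identity $\delta z(t)=\pdv{L}{\dot q^{i}}(\chi(t))\,\delta q^{i}(t)$. Applying this at $t=h$ to the variation generated by $\delta\dot q_{0}=v$ (which keeps $q_{0},z_{0}$ fixed, so indeed $\delta q(0)=\delta z(0)=0$, and which evolves through solutions because $\phi_{t}^{\xi_{L}}$ has Herglotz trajectories) yields $K=\left(\pdv{L}{\dot q}\right)^{\!\top}\! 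J$ at the endpoint. Therefore $Jv=0$ forces $Kv=0$, contradicting $Kv\neq0$. Equivalently: the immersion property of the exponential map guarantees that the stacked Jacobian $\binom{J}{K}$ has full column rank $n$, and combining this with $K=(\partial L/\partial\dot q)^{\top}J$ upgrades it to invertibility of the block $J$ itself.

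With $J$ invertible, the velocity directions already surject onto the $\partial/\partial q_{1}$-block under $T\widetilde{\text{pr}}_{1}$, after which the first two families supply the $\partial/\partial q_{0}^{i}$ and $\partial/\partial z_{0}$ directions; thus $T\widetilde{\text{pr}}_{1}|_{T_{p}M_{d}}$ is onto, hence an isomorphism, and $\widetilde{\text{pr}}_{1}|_{M_{d}}$ is a local diffeomorphism. I expect the only delicate point to be the invertibility of $J$: the embedding hypothesis by itself merely gives $\binom{J}{K}$ rank $n$, which does not preclude $J$ being singular. It is precisely the Herglotz boundary identity $\delta z=\pdv{L}{\dot q^{i}}\delta q^{i}$—tying the $z$-variation to the $q$-variation through the Lagrangian momentum—that forbids the degenerate direction and kills the dangerous $\partial/\partial z_{1}$ component.
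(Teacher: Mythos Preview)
Your proposal is correct and follows essentially the same strategy as the paper: both arguments reduce, by equality of dimensions, to showing that $\partial/\partial z_{1}\notin T_{p}M_{d}$, and both rule this out via the Herglotz variational structure---the paper by invoking criticality of the action along a one-parameter family with fixed $(q_{0},z_{0},q_{1})$, you by the equivalent boundary identity $\delta z(h)=\pdv{L}{\dot q^{i}}\,\delta q^{i}(h)$ (i.e., $K=(\partial L/\partial\dot q)^{\top}J$). Your Jacobian formulation is a bit more explicit than the paper's curve-based phrasing, but the underlying idea is identical.
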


\begin{proof}
    All we must prove is that $\widetilde{\text{pr}}_{1}|_{M_{d}}$ is a local submersion (immersion) since, by dimensional reasons, this forces $\widetilde{\text{pr}}_{1}|_{M_{d}}$ to be also a local immersion (submersion).
    
    Let $x\in M_{d}$. Any vector in the kernel of $T_{x}\widetilde{\text{pr}}_{1}|_{M_{d}}$ must be the tangent vector of a curve of the form
    \begin{equation*}
        Z(s)=(q_{0},z_{0},q_{1},w \cdot s)\in M_{d}, \quad w\in \R.
    \end{equation*}
    Let $\gamma_{s}(t)=\phi_{t}^{\Gamma_{(\widetilde{L},M_{L})}}\circ (\text{exp}_{h}^{\Gamma_{(\widetilde{L},M_{L})}})^{-1}(Z(s))$. For each fixed value of $s$, this is an integral curve of $\Gamma_{(\widetilde{L},M_{L})}$ satisfying
    \begin{equation*}
        \tau_{Q\times \R} \circ \gamma_{s}(0)=(q_{0},z_{0}), \quad \tau_{Q\times \R} \circ \gamma_{s}(h)=(q_{1},w \cdot s).
    \end{equation*}
    
    Moreover, note that the projection of $\gamma_{s}(t)$ to $TQ\times \R$, i.e., the curve $\pi \circ \gamma_{s}(t)$ is an integral curve of $\xi_{L}$ with endpoints $q_{0}$ and $q_{1}$ for each fixed value of $s$ and so $\pi \circ \gamma_{0}(t)$ must satisfy Herglotz' principle. Note that the action over the curves $\pi \circ \gamma_{s}(t)$ is given by
    \begin{equation*}
        \mathcal{A}(p_{Q}\circ \pi \circ \gamma_{s}(t))=p_{\R} \circ \pi \circ \gamma_{s}(h)=w \cdot s,
    \end{equation*}
    where $p_{\R}:TQ\times \R \rightarrow \R$ is the projection onto the second factor.
    
    Therefore, $p_{Q}\circ \pi \circ \gamma_{0}(t)$ is a critical value of the action if and only if $w=0$. Therefore, $T_{x}\widetilde{\text{pr}}_{1}|_{M_{d}}$ is trivial and $\widetilde{\text{pr}}_{1}|_{M_{d}}$ must be a local diffeomorphism in a neighbourhood of each point.
\end{proof}

Since the contact exponential map is a local diffeomorphism we can define a local inverse called the \textit{exact retraction} and denote it by $R_{h}^{e-}:Q \times Q \times \R \rightarrow TQ \times \R$. We will also use its translation by the flow
\begin{equation*}
	R_{h}^{e+}:Q \times Q \times \R \rightarrow TQ \times \R, \quad R_{h}^{e+}:=\phi_{h}^{\xi_{L}}\circ R_{h}^{e-}.
\end{equation*}

\subsection{The exact discrete Lagrangian function}

Let $L_{h}^{e}: Q \times Q \times \R \rightarrow \R$ and defined by
\begin{equation}
	L_{h}^{e} (q_{0},q_{1},z_{0})=\int_{0}^{h} L\circ \phi_{t}^{\xi_{L}}\circ R_{h}^{e-}(q_{0},q_{1},z_{0}) \dd t
\end{equation}
is called the \textit{exact discrete Lagrangian} function.

We will need the following classical result in the proof of the next theorem: the solution of the first order linear equation $\dot{y}=a(t)+\frac{\dd b}{\dd t}(t)y$ is
\begin{equation}\label{LODE}
y(t)=e^{b(t)}\left( \int_{0}^{t} a(s)e^{-b(s)} \ \dd s + y(0) \right).
\end{equation}

\begin{theorem}
The Legendre transforms of a regular Lagrangian $L:TQ\times\R\rightarrow\R$ are related to the discrete Legendre transforms of the corresponding exact discrete Lagrangian $L_{h}^{e}:Q\times Q\times \R\rightarrow\R$ in the following way
	\begin{equation}
		\F^{+}L_{h}^{e}=\F L\circ R_{h}^{e+}, \quad \F^{-}L_{h}^{e}=\F L\circ R_{h}^{e-}.
	\end{equation}
\end{theorem}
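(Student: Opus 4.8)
The plan is to reduce the statement to a boundary-term computation for the first variation of the contact action, exactly as in the proof of the Herglotz variational principle. Write $\chi(t) = \phi_t^{\xi_L}\circ R_h^{e-}(q_0,q_1,z_0)$ for the integral curve of $\xi_L$ determined by the boundary data, and split it into base curve, velocity and $\R$-component, $\chi = (c,\dot{c},\zeta)$, so that $c(0)=q_0$, $c(h)=q_1$, $\zeta(0)=z_0$ and $\zeta = \mathcal{Z}_{z_0}(c)$ solves $\dot{\zeta} = L(\chi)$. With this notation $L_h^e(q_0,q_1,z_0) = \int_0^h L(\chi(t))\,\dd t = \zeta(h)-z_0$, hence $z_0 + L_h^e = \zeta(h)$. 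This already matches the third slot of $\F^{+}L_h^e$ against the $\R$-component of $\F L(\chi(h))$, while the base points $q_0,q_1$ and the $\R$-component $z_0$ of $\F^{-}L_h^e$ match trivially. So everything reduces to identifying the momentum slots, i.e.\ to showing $D_2 L_h^e = \pdv{L}{\dot{q}^i}(\chi(h))$ and $-D_1 L_h^e/(1+D_z L_h^e) = \pdv{L}{\dot{q}^i}(\chi(0))$.

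To obtain these I would compute $\dd L_h^e$ by differentiating along a variation of the data $(q_0,q_1,z_0)$. Since the contact exponential map is a local diffeomorphism (established above), $R_h^{e-}$ depends smoothly on $(q_0,q_1,z_0)$, so a variation of the data induces a smooth family of integral curves $\chi_\lambda$, all solving the Herglotz equations, with induced base variation $\delta c$ satisfying $\delta c(0)=\delta q_0$, $\delta c(h)=\delta q_1$, and $\psi(0):=\partial_\lambda\zeta_\lambda(0)\vert_{\lambda=0} = \delta z_0$. Differentiating the defining ODE $\dot{\zeta}_\lambda = L(\chi_\lambda)$ gives the same linear equation for $\psi$ as in the Herglotz principle, and solving it with the integrating factor $\sigma(t)=\exp(-\int_0^t \pdv{L}{z}(\chi)\,\dd\tau)$ of~\eqref{eq:sigma} via the classical formula~\eqref{LODE} yields
\[
\psi(h) = \frac{1}{\sigma(h)}\left(\delta z_0 + \int_0^h \sigma\left(\pdv{L}{q^i}\delta c^i + \pdv{L}{\dot{q}^i}\delta\dot{c}^i\right)\dd\tau\right).
\]

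Next I would integrate the $\delta\dot{c}$ term by parts. This produces the boundary contribution $[\,\sigma\,\pdv{L}{\dot{q}^i}\delta c^i\,]_0^h$ plus an interior integral whose integrand is $\sigma$ times the Herglotz expression $\pdv{L}{q^i} - \dv{}{\tau}\pdv{L}{\dot{q}^i} + \pdv{L}{z}\pdv{L}{\dot{q}^i}$; because $\chi$ is an integral curve of $\xi_L$ this expression vanishes by~\eqref{eq:herglotz}, so only the boundary term survives. Using $\sigma(0)=1$ together with $\dd L_h^e = \psi(h) - \delta z_0$, I read off $D_2 L_h^e = \pdv{L}{\dot{q}^i}(\chi(h))$, $D_1 L_h^e = -\sigma(h)^{-1}\pdv{L}{\dot{q}^i}(\chi(0))$ and $D_z L_h^e = \sigma(h)^{-1}-1$. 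In particular $1 + D_z L_h^e = \sigma(h)^{-1}\neq 0$, whence $-D_1 L_h^e/(1+D_z L_h^e) = \pdv{L}{\dot{q}^i}(\chi(0))$. Substituting into the definitions of $\F^{\pm}L_h^e$ then gives $\F^{-}L_h^e = \F L(\chi(0)) = \F L\circ R_h^{e-}$ and $\F^{+}L_h^e = \F L(\chi(h)) = \F L\circ R_h^{e+}$, as claimed.

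The main obstacle is bookkeeping rather than conceptual: one must handle the $q$- and $z_0$-variations simultaneously (the continuous proof separated them) and track the conformal factor $\sigma(h)$ carefully, since it is precisely its appearance in $1+D_z L_h^e = \sigma(h)^{-1}$ that makes the quotient in the definition of $\F^{-}$ collapse to the starting momentum. A secondary point to justify cleanly is the smooth dependence of $\chi_\lambda$ on the boundary data and the fact that the induced interior variation $\delta c$ drops out, both of which rest on the local-diffeomorphism property of $\exp_h^{\xi_L}$ and on $\chi$ being an exact solution of the Herglotz equations.
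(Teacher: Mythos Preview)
Your proposal is correct and follows essentially the same route as the paper: both compute the first variation of $L_h^e=\zeta(h)-z_0$ by differentiating the defining ODE $\dot{\zeta}=L(\chi)$, solve the resulting linear equation with the integrating factor (your $\sigma(h)^{-1}$ is the paper's $e^{b(h)}$), integrate by parts, kill the interior term using that $\chi$ satisfies the Herglotz equations, and read off $D_1L_h^e$, $D_2L_h^e$, $D_zL_h^e$ from the surviving boundary terms. The only cosmetic difference is that the paper isolates each partial derivative by choosing specific variations at the end, whereas you keep the general variation and read off the coefficients directly.
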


\begin{proof}
	
	We will prove in local computations that the derivatives of the exact discrete Lagrangian function satisfy
	\begin{equation}
	\begin{split}
	& D_{1} L_{h}^{e} (q_{0},q_{1},z_{0}) = -\frac{\partial L}{\partial \dot{q}}(q_{0},\dot{q}_{0},z_{0})e^{b(h)}; \\
	& D_{2} L_{h}^{e} (q_{0},q_{1},z_{0}) = \frac{\partial L}{\partial \dot{q}}(q_{1},\dot{q}_{1},z_{1}); \\
	& D_{z}  L_{h}^{e} (q_{0},q_{1},z_{0}) = e^{b(h)}-1.
	\end{split}
	\end{equation}
	where
	\begin{equation}
	\begin{split}
	& (q_{0},\dot{q}_{0},z_{0})=R_{h}^{e-}(q_{0},q_{1},z_{0}), \quad (q_{1},\dot{q}_{1},z_{1})= \phi_{h}^{\xi_{L}}\circ R_{h}^{e-}(q_{0},q_{1},z_{0}), \\
	& \text{and} \quad b(t)=\int_{0}^{t} \frac{\partial L}{\partial z}(\phi_{s}^{\xi_{L}}\circ R_{h}^{e-}(q_{0},q_{1},z_{0})) \ \dd s
	\end{split}
	\end{equation}
	
To simplify the notation in the proof we will use the notation $\gamma_{0,1}(t)=(q_{0,1}(t),\dot{q}_{0,1}(t),z_{0,1}(t)):=\phi_{t}^{\xi_{L}}\circ R_{h}^{e-}(q_{0},q_{1},z_{0})$. Under this convention we will have
\begin{equation*}
	L_{h}^{e} (q_{0},q_{1},z_{0})=\int_{0}^{h} L(\gamma_{0,1}(t)) \dd t.
\end{equation*}
Note first that any variation of the exact discrete Lagrangian will take the form
\begin{equation}
\begin{split}
& \delta L_{h}^{e} (q_{0},q_{1},z_{0}) = \left. \frac{\dd}{\dd s} \right|_{s=0}L_{h}^{e}(q_{0}(s),q_{1}(s),z_{0}(s)) \\
& =\int_{0}^{h}\frac{\partial L}{\partial q}(\gamma_{0,1}(t))\delta q_{0,1} +\frac{\partial L}{\partial \dot{q}}(\gamma_{0,1}(t))\delta \dot{q}_{0,1}+\frac{\partial L}{\partial z}(\gamma_{0,1}(t))\delta z_{0,1} \ \dd t.
\end{split}
\end{equation}
Since $\gamma_{0,1}(t)$ is a solution of Euler-Lagrange equations, it satisfies
\begin{equation*}
\dot{z}_{0,1}=L(q_{0,1}(t),\dot{q}_{0,1}(t),z_{0,1}(t)).
\end{equation*}
Therefore, any variation of $z_{0,1}$ satisfies the variational equation
\begin{equation}\label{ode}
\delta \dot{z}_{0,1}=\frac{\partial L}{\partial q}(\gamma_{0,1}(t))\delta q_{0,1} +\frac{\partial L}{\partial \dot{q}}(\gamma_{0,1}(t))\delta \dot{q}_{0,1}+\frac{\partial L}{\partial z}(\gamma_{0,1}(t))\delta z_{0,1}.
\end{equation}
Hence, any variation of the exact discrete Lagrangian reduces to
\begin{equation}
\delta L_{h}^{e} (q_{0},q_{1},z_{0}) = \delta z_{0,1}(h)-\delta z_{0,1}(0).
\end{equation}
Moreover, we can solve the function $\delta z_{0,1}$ explicitly, by solving the differential equation \eqref{ode}
\begin{equation}
\delta z_{0,1}(h)=e^{b(h)}\left(\int_{0}^{h} a(s)e^{-b(s)} \ \dd s + \delta z_{0,1}(0) \right),
\end{equation}
with
\begin{equation*}
	\begin{split}
b(t) & = \int_{0}^{t} \frac{\partial L}{\partial z}(\gamma_{0,1}(s)) \ \dd s, \\
a(t) & = \frac{\partial L}{\partial q}(\gamma_{0,1}(t))\delta q_{0,1} +\frac{\partial L}{\partial \dot{q}}(\gamma_{0,1}(t))\delta \dot{q}_{0,1}.
	\end{split}
\end{equation*}
Let us compute the integration in the expression of $\delta z_{0,1}$:
\begin{equation*}
	\begin{split}
		\int_{0}^{h} a(s)e^{-b(s)} \ \dd s & = \int_{0}^{h} \left( \frac{\partial L}{\partial q}\delta q_{0,1} +\frac{\partial L}{\partial \dot{q}}\delta \dot{q}_{0,1} \right)e^{-b(t)} \ \dd t \\
		& = \int_{0}^{h} \left( \frac{\partial L}{\partial q} - \frac{d}{dt}\frac{\partial L}{\partial \dot{q}}+\frac{\partial L}{\partial \dot{q}}\frac{\partial L}{\partial z}\right)\delta q_{0,1} e^{-b(t)} \ \dd t \\
		& + \frac{\partial L}{\partial \dot{q}}(\gamma_{0,1}(h))e^{-b(h)}\delta q_{0,1}(h)-\frac{\partial L}{\partial \dot{q}}(\gamma_{0,1}(0))\delta q_{0,1}(0),
	\end{split}
\end{equation*}
where we are using integration by parts. Note that the term between brackets is zero, since we are over solutions of Euler-Lagrange equations. Therefore,
\begin{equation}
\delta z_{0,1}(h)=\frac{\partial L}{\partial \dot{q}}(\gamma_{0,1}(h))\delta q_{0,1}(h)-\frac{\partial L}{\partial \dot{q}}(\gamma_{0,1}(0))e^{b(h)}\delta q_{0,1}(0) + e^{b(h)}\delta z_{0,1}(0).
\end{equation}

Note that the differentials of the discrete Lagrangian $D_{1}L_{h}^{e}$, $D_{2}L_{h}^{e}$ and $D_{z} L_{h}^{e}$ are instances of particular variations. Therefore, we have that
\begin{equation}
	\begin{split}
D_{1} L_{h}^{e}(q_{0},q_{1},z_{0}) & = \left(\frac{\partial L}{\partial \dot{q}}(\gamma_{0,1}(h))\frac{\partial q_{0,1}(h)}{\partial q_{0}^{i}} -\frac{\partial L}{\partial \dot{q}}(\gamma_{0,1}(0))e^{b(h)}\frac{\partial q_{0,1}(0)}{\partial q_{0}^{i}} \right. \\
									& \left.+ (e^{b(h)}-1)\frac{\partial z_{0,1}(0)}{\partial q_{0}^{i}} \right) \dd q_{0}^{i}\\
									& = -\frac{\partial L}{\partial \dot{q}^{i}}(\gamma_{0,1}(0))e^{b(h)}\dd q_{0}^{i},
	\end{split}
\end{equation}
since $q_{0,1}(h)\equiv q_{1}$ and so its derivative with respect to $q_{0}$ vanishes, $q_{0,1}(0)\equiv q_{0}$ and so its derivative with respect to $q_{0}$ is the identity and, finally, $z_{0,1}(0)\equiv z_{0}$ does not depend upon $q_{0}$. Likewise, the next derivative follows from applying similar arguments. Indeed, we have that
\begin{equation}
	\begin{split}
D_{2} L_{h}^{e}(q_{0},q_{1},z_{0}) & = \left(\frac{\partial L}{\partial \dot{q}}(\gamma_{0,1}(h))\frac{\partial q_{0,1}(h)}{\partial q_{1}^{i}} -\frac{\partial L}{\partial \dot{q}}(\gamma_{0,1}(0))e^{b(h)}\frac{\partial q_{0,1}(0)}{\partial q_{1}^{i}} \right. \\
									& \left.+ (e^{b(h)}-1)\frac{\partial z_{0,1}(0)}{\partial q_{1}^{i}} \right) \dd q_{1}^{i}\\
									& = \frac{\partial L}{\partial \dot{q}^{i}}(\gamma_{0,1}(h))\dd q_{1}^{i}.
	\end{split}
\end{equation}
Analogously, we also deduce
\begin{equation}
	\begin{split}
D_{z} L_{h}^{e}(q_{0},q_{1},z_{0}) & = \left(\frac{\partial L}{\partial \dot{q}}(\gamma_{0,1}(h))\frac{\partial q_{0,1}(h)}{\partial z_{0}} -\frac{\partial L}{\partial \dot{q}}(\gamma_{0,1}(0))e^{b(h)}\frac{\partial q_{0,1}(0)}{\partial z_{0}} \right. \\
									& \left.+ (e^{b(h)}-1)\frac{\partial z_{0,1}(0)}{\partial z_{0}} \right) \dd z_{0}\\
									& = (e^{b(h)}-1)\dd z_{0}.
	\end{split}
\end{equation}
Now, the result follows by the definition of the discrete Legendre transforms.
\end{proof}

The commutativity of the following diagram summarizes the statement of the previous theorem
\begin{equation}
		\begin{tikzcd}
		Q\times Q\times \mathbb{R} \arrow[r, "R^{e\pm}_{h}"] \arrow[rd, "\mathbb{F} ^{\pm} L^{e}_{h}"'] & TQ\times \mathbb{R} \arrow[d, "\mathbb{F} L"] \\
		& T^{*}Q\times \mathbb{R}                      
		\end{tikzcd}
\end{equation}

Now, we are going to relate the continuous contact Lagrangian flow with its discrete counterpart, when we take as discrete Lagrangian the corresponding exact discrete Lagrangian.

\begin{theorem}
Take a regular Lagrangian $L:TQ\rightarrow \R$ and fix a time step $h>0$. Then we have that:
\begin{enumerate}
 \item $L_{h}^{e}$ is a regular discrete Lagrangian function;
 \item If $H$ is the Hamiltonian function corresponding to $L$ introduced at the end of Section \ref{contact:lagrangian} and $\phi_{t}^{X_{H}}$ is its contact Hamiltonian flow, we have that
 \begin{equation} 
 \F^{+} L_{h}^{e}=\phi_{h}^{X_{H}}\circ \F^{-} L_{h}^{e}.
 \end{equation}
 \item If $(q,z):[0,Nh]\rightarrow Q\times \R$ is a solution of the Herglotz equations, then it is related to the solution of the discrete Herglotz equations $\{ (q_{0},z_{0}),(q_{1},z_{1}),...,(q_{N},z_{N}) \}$ for the corresponding exact discrete Lagrangian with $(q(0),q(h),z(0))$ as initial conditions in the following way:
 	\begin{equation}
		q_{k}=q(kh), \quad z_{k}=z(kh) \quad \text{for} \ k=0,..., N.
	\end{equation}
\end{enumerate}
\end{theorem}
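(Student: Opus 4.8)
The plan is to reduce all three assertions to the two identities $\F^{+}L_{h}^{e}=\F L\circ R_{h}^{e+}$ and $\F^{-}L_{h}^{e}=\F L\circ R_{h}^{e-}$ established in the previous theorem, combined with the $\F L$-relatedness of the continuous Lagrangian and Hamiltonian flows.

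For item (1), the previous theorem gives $D_{z}L_{h}^{e}(q_{0},q_{1},z_{0})=e^{b(h)}-1$, so that $1+D_{z}L_{h}^{e}=e^{b(h)}>0$ is nowhere vanishing. Moreover $\F^{-}L_{h}^{e}=\F L\circ R_{h}^{e-}$ is a composition of local diffeomorphisms: $\F L$ is one because $L$ is regular, and $R_{h}^{e-}$ is by construction the local inverse of the contact exponential map, which is a local diffeomorphism by the theorem of the preceding subsection. Hence $\F^{-}L_{h}^{e}$ is a local diffeomorphism and $L_{h}^{e}$ is regular. For item (2), I would simply chain the identities: from the previous theorem $\F^{+}L_{h}^{e}=\F L\circ R_{h}^{e+}=\F L\circ \phi_{h}^{\xi_{L}}\circ R_{h}^{e-}$. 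Since $E_{L}=H\circ \F L$ and $\F L_{*}\xi_{L}=X_{H}$, the two flows are $\F L$-related, i.e. $\F L\circ \phi_{h}^{\xi_{L}}=\phi_{h}^{X_{H}}\circ \F L$; substituting yields $\F^{+}L_{h}^{e}=\phi_{h}^{X_{H}}\circ \F L\circ R_{h}^{e-}=\phi_{h}^{X_{H}}\circ \F^{-}L_{h}^{e}$.

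For item (3), I would first combine (2) with the definition $\widetilde{\Phi_{d}}=\F^{+}L_{h}^{e}\circ (\F^{-}L_{h}^{e})^{-1}$ of the discrete Hamiltonian flow to conclude $\widetilde{\Phi_{d}}=\phi_{h}^{X_{H}}$, that is, the exact discrete flow on $T^{*}Q\times\R$ coincides with the time-$h$ contact Hamiltonian flow. Next let $\gamma(t)=\phi_{t}^{\xi_{L}}(q(0),\dot{q}(0),z(0))$ be the lift to $TQ\times\R$ of the given Herglotz solution, and set $q_{k}=q(kh)$, $z_{k}=z(kh)$. The restriction of $\gamma$ to $[(k-1)h,kh]$ is an integral curve of $\xi_{L}$ running from $q_{k-1}$ to $q_{k}$ with $z$-value $z_{k-1}$ at its left endpoint, so by the very definition of $\text{exp}_{h}^{\xi_{L}}$ and of its local inverse this forces $R_{h}^{e-}(q_{k-1},q_{k},z_{k-1})=\gamma((k-1)h)$. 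Consequently $\F^{+}L_{h}^{e}(q_{k-1},q_{k},z_{k-1})=\F L(\phi_{h}^{\xi_{L}}(\gamma((k-1)h)))=\F L(\gamma(kh))$ and, applying the same identification to the next interval, $\F^{-}L_{h}^{e}(q_{k},q_{k+1},z_{k})=\F L(\gamma(kh))$. The two discrete Legendre transforms therefore agree, which is precisely the momentum-matching form of the discrete Herglotz equations; the accompanying recursion $z_{k}-z_{k-1}=L_{h}^{e}(q_{k-1},q_{k},z_{k-1})$ holds because $\dot{z}=L$ along $\gamma$, whence $\int_{0}^{h}L(\gamma((k-1)h+t))\,\dd t=z(kh)-z((k-1)h)$.

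The main obstacle is making the identification $R_{h}^{e-}(q_{k-1},q_{k},z_{k-1})=\gamma((k-1)h)$ fully rigorous: one must verify that each sampled triple $(q_{k-1},q_{k},z_{k-1})$ lies in the open set $V_{h}$ on which $\text{exp}_{h}^{\xi_{L}}$ is a diffeomorphism, so that its preimage is unique and hence equal to the continuous state. For $h$ small enough, with the trajectory confined to the embedding domain $N_{h}$ of the non-holonomic exponential map, this is guaranteed, and items (1) and (2) are then essentially formal consequences of the preceding theorem.
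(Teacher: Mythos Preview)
Your proof is correct and follows essentially the same route as the paper: items (1) and (2) are derived exactly as in the text by composing the identities $\F^{\pm}L_{h}^{e}=\F L\circ R_{h}^{e\pm}$ with the $\F L$-relatedness of $\xi_{L}$ and $X_{H}$, and item (3) is obtained by checking the momentum-matching form $\F^{+}L_{h}^{e}(q_{k-1},q_{k},z_{k-1})=\F^{-}L_{h}^{e}(q_{k},q_{k+1},z_{k})$ on the sampled continuous solution. Your treatment is in fact more complete than the paper's terse argument, since you explicitly verify the nonvanishing of $1+D_{z}L_{h}^{e}=e^{b(h)}$, the $z$-recursion via $\dot z=L$, and you flag the domain caveat for $R_{h}^{e-}$ that the paper leaves implicit.
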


\begin{proof}
	Item 1. is a consequence of of the theorem before, since $\F^{-}L_{h}^{e}$ is a composition of two local diffeomorphisms it is itself a local diffeomorphism. Item 2. comes from unyielding the definitions:
	\begin{equation*}
		\F^{+} L_{h}^{e}=\F L \circ R^{e+}_{h}=\F L \circ \phi_{h}^{\Gamma_{L}}\circ R^{e-}_{h}=\phi_{h}^{X_{H}}\circ\F L\circ R^{e-}_{h}=\phi_{h}^{X_{H}}\circ \F^{-} L_{h}^{e}.
	\end{equation*}
	For item 3. observe that, by discrete Herglotz equations, for every $k=1,...,N-1$ we have that
	\begin{equation*}
		\F^{+} L_{h}^{e}(q(k-1),q(k),z(k-1))=\F^{-} L_{h}^{e}(q(k),q(k+1),z(k))
	\end{equation*}
	so that $\{ (q_{0},z_{0}),(q_{1},z_{1}),...,(q_{N},z_{N}) \}$ is indeed the solution of these equations.
\end{proof}

%%%%%%%%%%%%%%%%%%%%%%%%%%%%%%%%%%%%%%%%%%%%%%%%%%%%%%%%%%%%%%%%%%%%%%%%%%%%%%%%%%%%%%%%%%%%%%%%%%%%%%%%%%%%%%%%%%%%%%%%%%%%%%%%%%%%%%%%%%%%%%%%%%%%%%%%%%%%%%%%%%%%%%%%%%%%%%%%%%%%%%%%%%%%%%%%%%%%%%%%

\section{Numerical examples}

Given a mechanical contact Lagrangian with a euclidean metric and a potential function $V:Q\rightarrow \R$ of the type
\begin{equation*}
L(q,\dot{q},z)=\frac{1}{2}\dot{q}^{2}-V(q)+\gamma z, \quad (q,\dot{q},z)\in TQ\times \R, \quad \gamma < 0.
\end{equation*}
one usually approximates the exact discrete Lagrangian associated to $L$ by means of a quadrature rule. Note that the restriction of $\gamma$ to negative values is necessary to model dissipative dynamics, though we could define the integrator for any value of $\gamma\in \R$. If we use the middle point rule to approximate the positions, i.e., $q\approx\frac{q_{1}+q_{0}}{2}$, one may define the discrete Lagrangian $L_{d}:Q\times Q\times \R\rightarrow \R$ in the following way
\begin{equation*}
L_{d}(q_{0},q_{1},z_{0})=\frac{1}{2h}(q_{1}-q_{0})^{2}-hV\left(\frac{q_{1}+q_{0}}{2}\right)+h\gamma z_{0}.
\end{equation*}
We remark that the value of $h$ should be chosen small enough so that the function $\sigma_{d}$ does not vanish anywhere. In this case, the discrete Herglotz equations are of the type
\begin{equation*}
    \begin{split}
        & \frac{q_{1}-q_{0}}{h}-\frac{h}{2}\frac{\partial V}{\partial q}\left(\frac{q_{1}+q_{0}}{2}\right)=\frac{1}{(1+h\gamma)}\left(\frac{q_{2}-q_{1}}{h}+\frac{h}{2}\frac{\partial V}{\partial q}\left(\frac{q_{2}+q_{1}}{2}\right)\right)  \\
			& z_{1}=L_{d}(q_{0},q_{1},z_{0})=\frac{1}{2h}(q_{1}-q_{0})^{2}-hV\left(\frac{q_{1}+q_{0}}{2}\right)+(h\gamma+1) z_{0}
    \end{split}
\end{equation*}

\begin{example}
	The free single particle contact Lagrangian is
	\begin{equation*}
		L(q,\dot{q},z)=\frac{1}{2}\dot{q}^{2}+\gamma z, \quad (q,\dot{q},z)\in TQ\times \R.
	\end{equation*}
	A simple discretization of this Lagrangian would be
	\begin{equation}\label{eq:free_particle_discrete}
	L_{d}(q_{0},q_{1},z_{0})=\frac{1}{2h}(q_{1}-q_{0})^{2}+h\gamma z_{0}.
	\end{equation}
	Then, choosing $h$ small enough so that the function $\sigma_{d}$ is non-vanishing, the discrete Herglotz equations for $L_{d}$ are locally given by
	\begin{equation*}
		\begin{split}
			& \frac{q_{1}-q_{0}}{h}=\frac{q_{2}-q_{1}}{h(1+h\gamma)} \quad \Rightarrow \quad q_{2}=(h\gamma+2)q_{1}-(h\gamma+1)q_{0} \\
			& z_{1}=\frac{1}{2h}(q_{1}-q_{0})^{2}+(h\gamma+1) z_{0}
		\end{split}		
	\end{equation*}

	The discrete flow obtained by solving these equations is plotted in Fig.~\ref{fig:free_particle}.

	In this case, one can also compute the exact discrete Lagrangian and solve the exact dynamics.
	\begin{equation}
		L_{h}^{e}(q_{0},q_{1},z_{0}) = \frac{\gamma \left(q_{1}- q_{0}\right)^{2} e^{\gamma h}}{2 e^{\gamma h} - 2} - z_{0} \left(e^{\gamma h} - 1\right).
	\end{equation}

	\begin{figure}[htb!]
		\centering
		% This file was created by tikzplotlib v0.9.1.
\begin{tikzpicture}

\begin{axis}[
tick align=outside,
tick pos=left,
width=0.45\textwidth,
x grid style={white!69.0196078431373!black},
xlabel={\(\displaystyle t\)},
xmin=-2.475, xmax=51.975,
xtick style={color=black},
y grid style={white!69.0196078431373!black},
ylabel={\(\displaystyle q\)},
ymin=-1.83688761054016, ymax=38.5746398213434,
ytick style={color=black}
]
\addplot [semithick, black]
table {%
0 0
0.5 1
1 1.975
1.5 2.925625
2 3.852484375
2.5 4.756172265625
3 5.63726795898438
3.5 6.49633626000977
4 7.33392785350953
4.5 8.15057965717179
5 8.9468151657425
5.5 9.72314478659895
6 10.480066166934
6.5 11.2180645127606
7 11.9376128999416
7.5 12.6391725774431
8 13.323193263007
8.5 13.9901134314319
9 14.6403605956461
9.5 15.2743515807549
10 15.8924927912361
10.5 16.4951804714552
11 17.0828009596688
11.5 17.6557309356771
12 18.2143376622852
12.5 18.7589792207281
13 19.2900047402099
13.5 19.8077546217047
14 20.3125607561621
14.5 20.8047467372581
15 21.2846280688267
15.5 21.7525123671061
16 22.2086995579285
16.5 22.6534820689803
17 23.0871450172558
17.5 23.5099663918245
18 23.9222172320289
18.5 24.3241618012283
19 24.7160577561976
19.5 25.0981563122927
20 25.4707024044855
20.5 25.8339348443734
21 26.1880864732641
21.5 26.5333843114326
22 26.8700497036469
22.5 27.1982984610558
23 27.5183409995295
23.5 27.8303824745413
24 28.1346229126779
24.5 28.431257339861
25 28.7204759063646
25.5 29.0024640087056
26 29.277402408488
26.5 29.5454673482759
27 29.8068306645691
27.5 30.061659897955
28 30.3101184005063
28.5 30.5523654404937
29 30.7885563044815
29.5 31.0188423968696
30 31.243371336948
30.5 31.4622870535245
31 31.6757298771865
31.5 31.883836630257
32 32.0867407145007
32.5 32.2845721966384
33 32.4774578917226
33.5 32.6655214444297
34 32.8488834083191
34.5 33.0276613231113
35 33.2019697900337
35.5 33.371920545283
36 33.5376225316511
36.5 33.69918196836
37 33.8567024191512
37.5 34.0102848586726
38 34.160027737206
38.5 34.3060270437761
39 34.4483763676819
39.5 34.58716695849
40 34.722487784528
40.5 34.854425589915
41 34.9830649501674
41.5 35.1084883264134
42 35.2307761182533
42.5 35.3500067152972
43 35.466256547415
43.5 35.5796001337299
44 35.6901101303869
44.5 35.7978573771275
45 35.9029109426996
45.5 36.0053381691323
46 36.1052047149043
46.5 36.202574597032
47 36.2975102321065
47.5 36.3900724763041
48 36.4803206643968
48.5 36.5683126477871
49 36.6541048315928
49.5 36.7377522108032
};
\addplot [semithick, black, dashed]
table {%
0 0
0.5 1
1 1.97468487947557
1.5 2.92469549375388
2 3.85065647483224
2.5 4.75317664207368
3 5.63284940250567
3.5 6.49025314098526
4 7.32595160048713
4.5 8.14049425276464
5 8.93441665962744
5.5 9.70824082507345
6 10.4624755385065
6.5 11.1976167092652
7 11.9141476926837
7.5 12.6125396078975
8 13.2932516476044
8.5 13.9567313799836
9 14.6034150429721
9.5 15.2337278310908
10 15.8480841750103
10.5 16.4468880140384
11 17.030533061711
11.5 17.5994030646584
12 18.1538720549183
12.5 18.6943045958628
13 19.2210560218979
13.5 19.7344726720964
14 20.234892117916
14.5 20.7226433851518
15 21.1980471702716
15.5 21.6614160512733
16 22.1130546932052
16.5 22.553260048483
17 22.9823215521364
17.5 23.4005213121125
18 23.8081342947614
18.5 24.2054285056272
19 24.5926651656613
19.5 24.9700988829751
20 25.3379778202451
20.5 25.6965438578797
21 26.0460327530556
21.5 26.3866742947281
22 26.7186924547176
22.5 27.0423055349705
23 27.3577263110936
23.5 27.6651621722532
24 27.9648152575339
24.5 28.2568825888452
25 28.5415562004631
25.5 28.8190232652928
26 29.0894662179347
26.5 29.3530628746354
27 29.6099865502019
27.5 29.8604061719558
28 30.1044863908034
28.5 30.3423876894931
29 30.5742664881335
29.5 30.8002752470393
30 31.0205625669738
30.5 31.235273286854
31 31.4445485789827
31.5 31.6485260418682
32 31.8473397906965
32.5 32.0411205455113
33 32.2299957171626
33.5 32.4140894910794
34 32.5935229089217
34.5 32.7684139481652
35 32.9388775996715
35.5 33.1050259432949
36 33.2669682215745
36.5 33.4248109115615
37 33.5786577948274
37.5 33.7286100257012
38 33.8747661977774
38.5 34.0172224087421
39 34.1560723235568
39.5 34.2914072360431
40 34.4233161289086
40.5 34.5518857322529
41 34.6772005805927
41.5 34.7993430684433
42 34.9183935044928
42.5 35.0344301644051
43 35.1475293422865
43.5 35.2577654008485
44 35.3652108203019
44.5 35.469936246012
45 35.5720105349483
45.5 35.6715008009576
46 35.7684724588918
46.5 35.862989267618
47 35.9551133719396
47.5 36.0449053434571
48 36.1324242203935
48.5 36.217727546412
49 36.3008714084512
49.5 36.381910473602
};
\end{axis}

\end{tikzpicture}\hfill
		% This file was created by tikzplotlib v0.9.1.
\begin{tikzpicture}

\begin{axis}[
tick align=outside,
tick pos=left,
width=0.45\textwidth,
x grid style={white!69.0196078431373!black},
xlabel={\(\displaystyle t\)},
xmin=-2.475, xmax=51.975,
xtick style={color=black},
y grid style={white!69.0196078431373!black},
ylabel={\(\displaystyle z\)},
ymin=-0.512959232079956, ymax=10.7721438736791,
ytick style={color=black}
]
\addplot [semithick, black]
table {%
0 0
0.5 1
1 1.925625
1.5 2.781172265625
2 3.57071126000977
2.5 4.29809528217178
3 4.96697252097393
3.5 5.58079655377623
4 6.14283631743329
4.5 6.65618557792229
5 7.12377192358309
5.5 7.54836530571262
6 7.9325861490781
6.5 8.27891305379402
7 8.58969010894394
7.5 8.8671338373163
8 9.11333978966275
8.5 9.33028880597299
9 9.5198529603923
9.5 9.68380120558179
10 9.82380473153734
10.5 9.9414420531368
11 10.0382038399768
11.5 10.1154975013863
12 10.1746515388634
12.5 10.2169196775749
13 10.2434847879765
13.5 10.255462608065
14 10.2539052762492
14.5 10.2398046843303
15 10.2140956596101
15.5 10.1776589846962
16 10.1313242631493
16.5 10.0758726387081
17 10.0120393754474
17.5 9.94051630585337
18 9.86195415345631
18.5 9.77696473632874
19 9.68612305744186
19.5 9.58996928757579
20 9.48901064619448
20.5 9.38372318542656
21 9.27455348203686
21.5 9.16192024202976
22 9.04621582229379
22.5 8.92780767347694
23 8.8070397080727
23.5 8.68423359749845
24 8.55969000175851
24.5 8.43368973510482
25 8.30649487093758
25.5 8.17834978902601
26 8.0494821679753
26.5 7.92010392571941
27 7.79041211068021
27.5 7.66058974610124
28 7.53080662993872
28.5 7.40122009257294
29 7.27197571448991
29.5 7.14320800597504
30 7.01504105075839
30.5 6.88758911545361
31 6.7609572265401
31.5 6.63524171655014
32 6.51053074103917
32.5 6.38690476783797
33 6.26443704001014
33.5 6.1431940138667
34 6.02323577332142
34.5 5.90461642180583
35 5.78738445290152
35.5 5.6715831007888
36 5.55725067155542
36.5 5.44442085635623
37 5.33312302736479
37.5 5.22338251741002
38 5.11522088414623
38.5 5.00865615956151
39 4.90370308558891
39.5 4.80037333654607
40 4.69867572909201
40.5 4.59861642035505
41 4.50019909485231
41.5 4.40342514078996
42 4.30829381630329
42.5 4.21480240616715
43 4.12294636948039
43.5 4.0327194788021
44 3.94411395119318
44.5 3.85712057159352
45 3.77172880894309
45.5 3.68792692543423
46 3.60570207926279
46.5 3.52504042122678
47 3.44592718550311
47.5 3.36834677491643
48 3.29228284099753
48.5 3.21771835911356
49 3.14463569893786
49.5 3.0730166905132
};
\addplot [semithick, black, dashed]
table {%
0 0
0.5 1.01255208279081
1 1.94885443093858
1.5 2.81336762123583
2 3.61031282201487
2.5 4.34368417688022
3 5.01726055885869
3.5 5.63461672670579
4 6.19913391351428
4.5 6.71400987625671
5 7.18226843345646
5.5 7.60676851681587
6 7.99021276133285
6.5 8.33515565720521
7 8.64401128565136
7.5 8.9190606596642
8 9.16245868965897
8.5 9.37624079297255
9 9.56232916521894
9.5 9.7225387306003
10 9.85858278741354
10.5 9.97207836417532
11 10.064551301013
11.5 10.1374410702324
12 10.1921053492722
12.5 10.2298243585925
13 10.2518049764105
13.5 10.2591846415991
14 10.2530350554937
14.5 10.2343656928101
15 10.204127131365
15.5 10.1632142098006
16 10.1124690220516
16.5 10.0526837568518
17 9.98460339016195
17.5 9.90892823799935
18 9.82631637677517
18.5 9.73738593788608
19 9.64271728296657
19.5 9.5428550658847
20 9.43831018725761
20.5 9.32956164697078
21 9.21705829990867
21.5 9.10122051984083
22 8.98244177615804
22.5 8.86109012791557
23 8.73750963941547
23.5 8.6120217213455
24 8.48492640128936
24.5 8.35650352722966
25 8.22701390748167
25.5 8.09670039032206
26 7.96578888641103
26.5 7.83448933694982
27 7.70299663036582
27.5 7.57149147017634
28 7.44014119654744
28.5 7.30910056393648
29 7.17851247708582
29.5 7.04850868752001
30 6.91921045258913
30.5 6.79072915899762
31 6.66316691265866
31.5 6.53661709662115
32 6.41116489872696
32.5 6.28688781057187
33 6.16385609926346
33.5 6.04213325339298
34 5.92177640456607
34.5 5.80283672576836
35 5.68535980777716
35.5 5.56938601476806
36 5.4549508202071
36.5 5.34208512406286
37 5.2308155523202
37.5 5.121164739727
38 5.01315159665742
38.5 4.90679156093004
39 4.80209683537612
39.5 4.69907661191234
40 4.59773728283374
40.5 4.49808264000553
41 4.40011406259789
41.5 4.30383069397412
42 4.20922960831177
42.5 4.11630596750559
43 4.02505316887373
43.5 3.93546298416083
44 3.84752569030663
44.5 3.76123019242429
45 3.67656413940951
45.5 3.59351403257982
46 3.51206532772254
46.5 3.43220253091031
47 3.35390928842446
47.5 3.27716847110847
48 3.2019622534574
48.5 3.12827218773277
49 3.05607927337754
49.5 2.98536402199118
};
\end{axis}

\end{tikzpicture}\\
		% This file was created by tikzplotlib v0.9.1.
\begin{tikzpicture}

\begin{axis}[
tick align=outside,
tick pos=left,
width=0.45\textwidth,
x grid style={white!69.0196078431373!black},
xlabel={\(\displaystyle t\)},
xmin=-2.45, xmax=51.45,
xtick style={color=black},
y grid style={white!69.0196078431373!black},
ylabel={\(\displaystyle \log(H)\)},
ymin=-1.9178502511308, ymax=0.89734043188779,
ytick style={color=black}
]
\addplot [semithick, black]
table {%
0 0.693147180559945
0.5 0.668470192847505
1 0.643776789497692
1.5 0.619067400879906
2 0.594342445582776
2.5 0.569602330761765
3 0.544847452475263
3.5 0.5200781960096
4 0.495294936193436
4.5 0.470498037701874
5 0.445687855350745
5.5 0.420864734381397
6 0.396029010736334
6.5 0.371181011326076
7 0.346321054287526
7.5 0.321449449234188
8 0.296566497498517
8.5 0.271672492366641
9 0.246767719305836
9.5 0.221852456184885
10 0.196926973487663
10.5 0.171991534520159
11 0.147046395611113
11.5 0.122091806306615
12 0.0971280095587175
12.5 0.0721552419084383
13 0.0471737336631208
13.5 0.0221837090686286
14 -0.00281461352372575
14.5 -0.0278210214951704
15 -0.0528353078022463
15.5 -0.0778572708242959
16 -0.10288671421532
16.5 -0.12792344676032
17 -0.152967282235756
17.5 -0.178018039274045
18 -0.203075541232097
18.5 -0.228139616063552
19 -0.253210096194728
19.5 -0.278286818404172
20 -0.30336962370565
20.5 -0.328458357234447
21 -0.353552868136946
21.5 -0.378653009463309
22 -0.403758638063233
22.5 -0.428869614484634
23 -0.453985802875165
23.5 -0.479107070886564
24 -0.504233289581574
24.5 -0.529364333343595
25 -0.554500079788748
25.5 -0.57964040968044
26 -0.604785206846272
26.5 -0.629934358097284
27 -0.655087753149374
27.5 -0.680245284546949
28 -0.705406847588577
28.5 -0.73057234025481
29 -0.755741663137855
29.5 -0.780914719373244
30 -0.806091414573362
30.5 -0.831271656762735
31 -0.856455356315126
31.5 -0.881642425892276
32 -0.906832780384386
32.5 -0.932026336852149
33 -0.957223014470333
33.5 -0.982422734472932
34 -1.0076254200997
34.5 -1.03283099654426
35 -1.05803939090349
35.5 -1.08325053212827
36 -1.10846435097555
36.5 -1.13368077996183
37 -1.15889975331756
37.5 -1.1841212069431
38 -1.20934507836569
38.5 -1.23457130669745
39 -1.25979983259475
39.5 -1.2850305982184
40 -1.31026354719509
40.5 -1.33549862457965
41 -1.36073577681849
41.5 -1.3859749517139
42 -1.41121609838929
42.5 -1.43645916725539
43 -1.46170410997725
43.5 -1.48695087944221
44 -1.51219942972862
44.5 -1.53744971607541
45 -1.5627016948525
45.5 -1.58795532353183
46 -1.61321056065932
46.5 -1.63846736582752
47 -1.66372569964888
47.5 -1.68898552372974
48 -1.7142468006451
48.5 -1.73950949391395
49 -1.76477356797516
};
\addplot [semithick, black, dashed]
table {%
0 0.769377219023309
0.5 0.742485311171929
1 0.715623561104235
1.5 0.68879127723286
2 0.661987782133283
2.5 0.635212412335267
3 0.608464518113258
3.5 0.581743463276004
4 0.555048624955642
4.5 0.528379393396511
5 0.50173517174388
5.5 0.475115375832841
6 0.448519433977542
6.5 0.421946786760954
7 0.395396886825365
7.5 0.368869198663752
8 0.342363198412151
8.5 0.315878373643281
9 0.289414223161438
9.5 0.262970256798833
10 0.236545995213539
10.5 0.210140969689063
11 0.183754721935692
11.5 0.157386803893768
12 0.13103677753886
12.5 0.104704214688995
13 0.0783886968139646
13.5 0.0520898148468623
14 0.0258071689978048
14.5 -0.000459631430073079
15 -0.0267109682222659
15.5 -0.0529472144321438
16 -0.0791687345578582
16.5 -0.105375884716782
17 -0.13156901281742
17.5 -0.15774845872889
18 -0.183914554447909
18.5 -0.210067624263157
19 -0.236207984917207
19.5 -0.26233594576581
20 -0.288451808934735
20.5 -0.314555869473956
21 -0.340648415509311
21.5 -0.366729728391639
22 -0.392800082843374
22.5 -0.418859747102535
23 -0.44490898306429
23.5 -0.470948046419914
24 -0.496977186793324
24.5 -0.522996647875082
25 -0.549006667553934
25.5 -0.575007478045955
26 -0.600999306021202
26.5 -0.626982372728026
27 -0.652956894114981
27.5 -0.67892308095035
28 -0.704881138939382
28.5 -0.730831268839248
29 -0.756773666571646
29.5 -0.782708523333223
30 -0.808636025703785
30.5 -0.834556355752238
31 -0.86046969114046
31.5 -0.886376205225014
32 -0.91227606715671
32.5 -0.938169441978204
33 -0.964056490719437
33.5 -0.989937370491177
34 -1.0158122345765
34.5 -1.04168123252042
35 -1.06754451021745
35.5 -1.09340220999749
36 -1.11925447070966
36.5 -1.14510142780451
37 -1.17094321341424
37.5 -1.19677995643136
38 -1.22261178258552
38.5 -1.2484388145186
39 -1.27426117185819
39.5 -1.30007897128964
40 -1.32589232662605
40.5 -1.35170134887708
41 -1.37750614631615
41.5 -1.40330682454594
42 -1.42910348656256
42.5 -1.45489623281841
43 -1.48068516128316
43.5 -1.50647036750388
44 -1.5322519446634
44.5 -1.55802998363744
45 -1.58380457305055
45.5 -1.60957579933058
46 -1.63534374676205
46.5 -1.66110849753817
47 -1.68687013181166
47.5 -1.71262872774442
48 -1.73838436155611
48.5 -1.76413710757141
49 -1.78988703826632
};
\end{axis}

\end{tikzpicture}
		\caption{Position $q$ and $z$ and logarithm of the discrete Hamiltonian $H\circ \F^- L_d$ for a free particle, computed by solving the discrete Herglotz equations for the discrete Lagrangian~\eqref{eq:free_particle_discrete} (continuous line) and the exact dynamics (dashed line), for $\gamma = -0.05$ and the time-step $h = 0.5$. The initial conditions are $q_{0} = 1$, $q_{1} = 2$ and $z_0 = 0$.}
		\label{fig:free_particle}
	\end{figure}
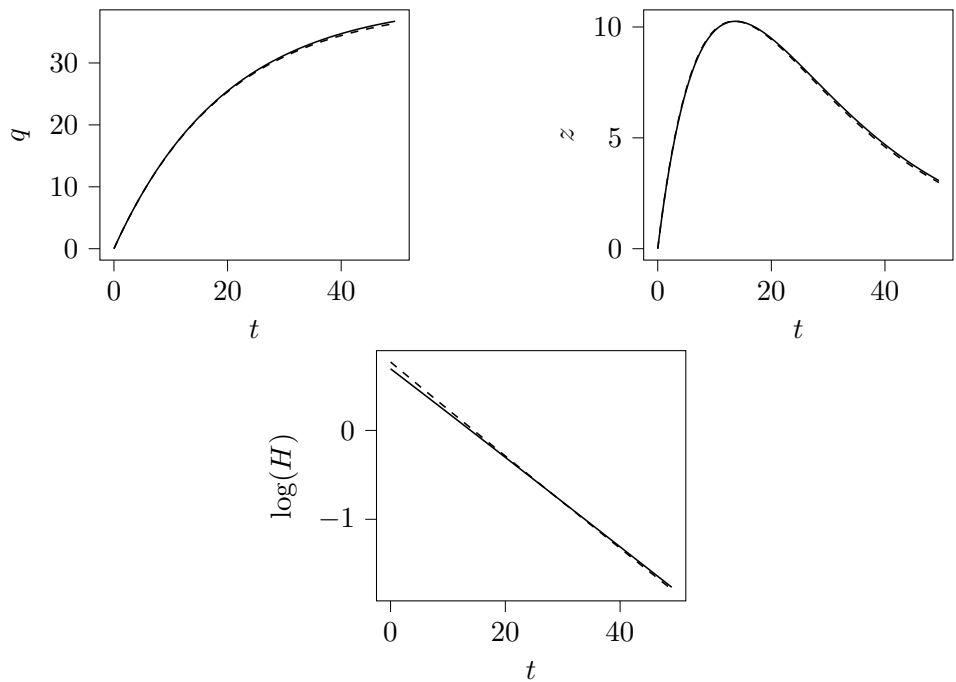
\end{example}

\begin{example}
	The damped harmonic oscillator is described by the Lagrangian
	\begin{equation*}
		L(q,\dot{q},z)=\frac{1}{2}\dot{q}^{2}-\frac{1}{2}q^{2}+\gamma z, \quad (q,\dot{q},z)\in TQ\times \R.
	\end{equation*}
	Using a middle point discretization, i.e., $q\approx\frac{q_{1}+q_{0}}{2}$, one may define the discrete Lagrangian
	\begin{equation}\label{eq:harmonic_oscillator_discrete}
	L_{d}(q_{0},q_{1},z_{0})=\frac{1}{2h}(q_{1}-q_{0})^{2}-\frac{h}{8}(q_{1}+q_{0})^{2}+h\gamma z_{0}.
	\end{equation}
	In this case, after choosing $h$ small enough, the discrete Herglotz equations hold
	\begin{equation*}
		\begin{split}
			& \frac{q_{1}-q_{0}}{h}-\frac{h}{4}(q_{1}+q_{0})=\frac{1}{(1+h\gamma)}\left(\frac{q_{2}-q_{1}}{h}+\frac{h}{4}(q_{2}+q_{1})\right)  \\
			& z_{1}=\frac{1}{2h}(q_{1}-q_{0})^{2}-\frac{h}{8}(q_{1}+q_{0})^{2}+(h\gamma+1) z_{0},
		\end{split}		
	\end{equation*}
	which can be solved explicitly for $q_{2}$
	\begin{equation*}
	    q_{2}=-\frac{(h^{3}\gamma+4h\gamma+h^{2}+4)q_{0}+(h^{3}\gamma-4h\gamma+2h^{2}-8)q_{1}}{h^{2}+4}.
	\end{equation*}

	The discrete flow obtained by solving these equations is plotted in Fig.~\ref{fig:harmonic_oscillator}.

	In this case, the exact discrete Lagrangian and the exact discrete dynamics can be computed with the aid of a Computer Algebra system, but the analytic expressions are complicated, so we only include their graph in Fig.~\ref{fig:harmonic_oscillator}.
	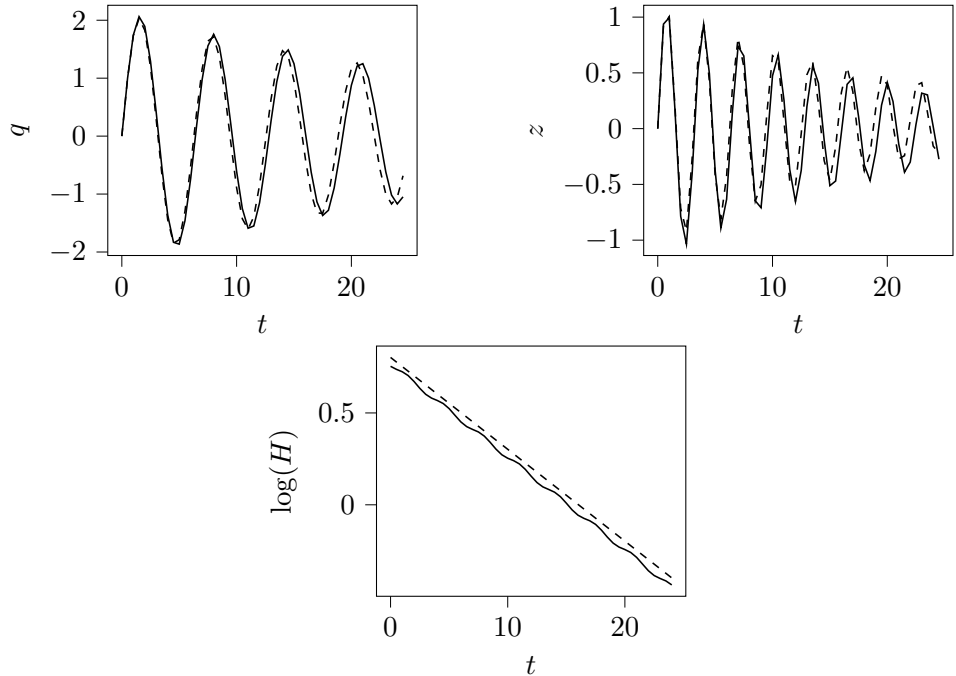
\begin{figure}[htb!]
		\centering
		% This file was created by tikzplotlib v0.9.1.
\begin{tikzpicture}

\begin{axis}[
tick align=outside,
tick pos=left,
width=0.45\textwidth,
x grid style={white!69.0196078431373!black},
xlabel={\(\displaystyle t\)},
xmin=-1.225, xmax=25.725,
xtick style={color=black},
y grid style={white!69.0196078431373!black},
ylabel={\(\displaystyle q\)},
ymin=-2.06040817585875, ymax=2.25811529293511,
ytick style={color=black}
]
\addplot [semithick, black]
table {%
0 0
0.5 1
1 1.74264705882353
1.5 2.0618187716263
2 1.89394153584877
2.5 1.29019834469494
3 0.401767353229074
3.5 -0.557804689641609
4 -1.36377987120027
4.5 -1.83272740902931
5 -1.86411165454994
5.5 -1.46157946831653
6 -0.729508298512478
6.5 0.153764490818493
7 0.979227828726011
7.5 1.5565281170995
8 1.75773201223172
8.5 1.54549160714355
9 0.979457691699236
9.5 0.199994748716795
10 -0.606450988775281
10.5 -1.25182491190874
11 -1.59019928684389
11.5 -1.55062682105076
12 -1.1517509643642
12.5 -0.495234280021938
13 0.259938628746272
13.5 0.935834709880703
14 1.37738944169096
14.5 1.48786481728366
15 1.24986854211768
15.5 0.727411541985869
16 0.0489997556312232
16.5 -0.6238369734024
17 -1.13490242862551
17.5 -1.36949333022858
18 -1.28001365609141
18.5 -0.895356036068782
19 -0.312276248166033
19.5 0.328784849760078
20 0.877425293382019
20.5 1.20847737833347
21 1.25045988796013
21.5 1.00084480205538
22 0.524920859879504
22.5 -0.0610718894198597
23 -0.618224586856831
23.5 -1.01780216579409
24 -1.17090097849987
24.5 -1.04811003470716
};
\addplot [semithick, black, dashed]
table {%
0 0
0.5 1
1 1.73351007378733
1.5 2.02974726389382
2 1.82787777163645
2.5 1.1890119053957
3 0.278416807203213
3.5 -0.677016756853642
4 -1.44515803986919
4.5 -1.84489486675936
5 -1.78866687587402
5.5 -1.30132779777673
6 -0.511360313388765
6.5 0.382749645375182
7 1.16223514826238
7.5 1.64144681466281
8 1.71192512859608
8.5 1.366720107583
9 0.699565527972696
9.5 -0.120271777879106
10 -0.890785532090847
10.5 -1.42688343636073
11 -1.60472485211197
11.5 -1.39015313800131
12 -0.844740414489336
12.5 -0.10853588352258
13 0.635735651888571
13.5 1.20791027982492
14 1.47388535559105
14.5 1.37690824277054
15 0.949389313022191
15.5 0.302863680942375
16 -0.400931565466393
16.5 -0.990404857651801
17 -1.32584426802289
17.5 -1.33241272030206
18 -1.01664181667732
18.5 -0.462843497620357
19 0.189196975076034
19.5 0.779390713171392
20 1.16655596758146
20.5 1.2620890335403
21 1.05009045554063
21.5 0.589414438793482
22 -0.00240776253091266
22.5 -0.579035625050525
23 -1.00141577444462
23.5 -1.17122514851999
24 -1.05363986275509
24.5 -0.684187819661455
};
\end{axis}

\end{tikzpicture}\hfill
		% This file was created by tikzplotlib v0.9.1.
\begin{tikzpicture}

\begin{axis}[
tick align=outside,
tick pos=left,
width=0.45\textwidth,
x grid style={white!69.0196078431373!black},
xlabel={\(\displaystyle t\)},
xmin=-1.225, xmax=25.725,
xtick style={color=black},
y grid style={white!69.0196078431373!black},
ylabel={\(\displaystyle z\)},
ymin=-1.13770334051163, ymax=1.10616739776105,
ytick style={color=black}
]
\addplot [semithick, black]
table {%
0 0
0.5 0.9375
1 0.995455098399655
1.5 0.167816787251847
2 -0.786198341774351
2.5 -1.03570921604469
3 -0.399428604231016
3.5 0.529813888186605
4 0.935384082741338
4.5 0.493307599373682
5 -0.372203811119105
5.5 -0.892130395181838
6 -0.633953037765747
6.5 0.141349050294788
7 0.738975571507662
7.5 0.651898166626748
8 -0.0104362929272084
8.5 -0.647084788526955
9 -0.708973835488477
9.5 -0.170631263818626
10 0.473663828103324
10.5 0.662505400369975
11 0.255621136330009
11.5 -0.365752702031336
12 -0.653934791377127
12.5 -0.37610728947455
13 0.200121262180427
13.5 0.562586625783103
14 0.409054667859153
14.5 -0.102072013260742
15 -0.511326982635029
15.5 -0.469934774269454
16 -0.0356197594895317
16.5 0.39732763425094
17 0.455259575151726
17.5 0.106911094577479
18 -0.326498025113166
18.5 -0.466138670941935
19 -0.205651648478059
19.5 0.210431940637712
20 0.415243546660769
20.5 0.24252106870309
21 -0.1396777068729
21.5 -0.390651373545208
22 -0.299879043916575
22.5 0.0375581927004869
23 0.318198134726648
23.5 0.302618951985995
24 0.0190913837731257
24.5 -0.27405890223958
};
\addplot [semithick, black, dashed]
table {%
0 0
0.5 0.928016041051869
1 1.00417327329411
1.5 0.202369180556242
2 -0.714860531707061
2.5 -0.918639836156657
3 -0.260316863289759
3.5 0.628401641321323
4 0.940018905778733
4.5 0.423978676013293
5 -0.41579976006666
5.5 -0.809628118346431
6 -0.425260681505013
6.5 0.355485209287288
7 0.813965706323787
7.5 0.55561231767988
8 -0.152026991123853
8.5 -0.651513930864649
9 -0.504119934431786
9.5 0.127392731414092
10 0.654070535913239
10.5 0.609723755572906
11 0.0630150972293261
11.5 -0.470200881636466
12 -0.511301185754617
12.5 -0.0463031995175501
13 0.484257296037269
13.5 0.602562964085641
14 0.223836693962327
14.5 -0.286798998182246
15 -0.463702034041465
15.5 -0.163196497805122
16 0.323283233473017
16.5 0.55176994590295
17 0.331349839727923
17.5 -0.116968168587617
18 -0.37854029571641
18.5 -0.226379907521909
19 0.184435763146053
19.5 0.474472097344502
20 0.391154085018497
20.5 0.029087598682794
21 -0.271713565569019
21.5 -0.242716975148194
22 0.0757979722049635
22.5 0.385924354338248
23 0.411854092756303
23.5 0.146143866277505
24 -0.15669686248772
24.5 -0.221222254077577
};
\end{axis}

\end{tikzpicture}\\ 
		% This file was created by tikzplotlib v0.9.1.
\begin{tikzpicture}

\begin{axis}[
tick align=outside,
tick pos=left,
width=0.45\textwidth,
x grid style={white!69.0196078431373!black},
xlabel={\(\displaystyle t\)},
xmin=-1.2, xmax=25.2,
xtick style={color=black},
y grid style={white!69.0196078431373!black},
ylabel={\(\displaystyle \log(H)\)},
ymin=-0.497917525152205, ymax=0.864040398244621,
ytick style={color=black}
]
\addplot [semithick, black]
table {%
0 0.75377180237638
0.5 0.737254214601081
1 0.724044383495714
1.5 0.703474333090872
2 0.671435739757414
2.5 0.633756603463665
3 0.601376698976027
3.5 0.580674535810622
4 0.567470651664026
4.5 0.550985917175371
5 0.523533621199151
5.5 0.486957887489165
6 0.451126003676248
6.5 0.425467615094999
7 0.410291642685679
7.5 0.396492271525865
8 0.373961334228078
8.5 0.34029041498921
9 0.302741686032156
9.5 0.272184142206276
10 0.253359891201856
10.5 0.240399722302362
11 0.222352516063092
11.5 0.192908627856452
12 0.155636681015125
12.5 0.121055351554631
13 0.0974998922388164
13.5 0.0833721502316193
14 0.0686605731918845
14.5 0.0440923339772405
15 0.00902004799281766
15.5 -0.0280495584863238
16 -0.0566314451368834
16.5 -0.0737674126097449
17 -0.08683481918604
17.5 -0.106645020596659
18 -0.137967239062085
18.5 -0.175599268510006
19 -0.208661244423591
19.5 -0.230165480624614
20 -0.24356708949499
20.5 -0.259486431111851
21 -0.286112507810329
21.5 -0.322315290781766
22 -0.35856882807483
22.5 -0.385079372921775
23 -0.400765172371498
23.5 -0.414285712159394
24 -0.436010346815985
};
\addplot [semithick, black, dashed]
table {%
0 0.802133219908402
0.5 0.777133219908402
1 0.752133219908402
1.5 0.727133219908402
2 0.702133219908402
2.5 0.677133219908402
3 0.652133219908402
3.5 0.627133219908402
4 0.602133219908402
4.5 0.577133219908402
5 0.552133219908402
5.5 0.527133219908402
6 0.502133219908402
6.5 0.477133219908402
7 0.452133219908402
7.5 0.427133219908402
8 0.402133219908402
8.5 0.377133219908402
9 0.352133219908402
9.5 0.327133219908402
10 0.302133219908402
10.5 0.277133219908402
11 0.252133219908402
11.5 0.227133219908402
12 0.202133219908402
12.5 0.177133219908402
13 0.152133219908402
13.5 0.127133219908402
14 0.102133219908402
14.5 0.0771332199084017
15 0.0521332199084017
15.5 0.0271332199084017
16 0.00213321990840177
16.5 -0.0228667800915984
17 -0.0478667800915983
17.5 -0.0728667800915983
18 -0.0978667800915983
18.5 -0.122866780091598
19 -0.147866780091598
19.5 -0.172866780091598
20 -0.197866780091598
20.5 -0.222866780091598
21 -0.247866780091598
21.5 -0.272866780091598
22 -0.297866780091598
22.5 -0.322866780091598
23 -0.347866780091598
23.5 -0.372866780091598
24 -0.397866780091598
};
\end{axis}

\end{tikzpicture}
		\caption{Position $q$ and $z$ and logarithm of the discrete Hamiltonian $H\circ \F^- L_d$ for a harmonic oscillator, computed by solving the discrete Herglotz equations on the discrete Lagrangian~\eqref{eq:harmonic_oscillator_discrete} (continuous line) and the exact dynamics (dashed line), for $\gamma =- 0.05$ and the time-step $h = 0.5$. The initial conditions are $
		q_{0} = 1$, $q_{1} = 2$ and $z_0 = 0$.}
		\label{fig:harmonic_oscillator}
	\end{figure}
\end{example}

%For acknowledgements section, please don't number the section, please begin it with \section*{Acknowledgements}

\section{Conclusions and future work}
In this paper, we went deeper in the geometry of discrete contact mechanics following, as a starting point, the results by \cite{VBS}. We have done a detailed study of the discrete Herglotz principle and its geometric properties, including the discrete Legendre transforms and the associated discrete Lagrangian and Hamiltonian flows. Moreover, we have analyzed the existence of dissipated quantities related with symmetries of the system and the construction of the exact discrete Lagrangian function given the correspondence between the discrete and continuous system.  

In future work, we will study the variational error analysis allowing us to estimate the error order of the proposed methods just from the error of approximation of the exact discrete Lagrangian function, that is, how well the discrete Lagrangian function matches the exact discrete Lagrangian function \cite{marsden-west,PatrickCuell}. Moreover, we will introduce higher-order methods for contact Lagrangian systems extending the theory of Morse functions to Legendrian submanifolds (see \cite{marle,BCGM,FLMMV}). For instance, this theory will give a complete geometric explanation of other possible discretizations of the phase space, as for instance,  the one used by
Vermeeren {\it et al} which is $Q \times Q \times \mathbb{R}^2$ instead of $Q\times Q\times \R$.

 \noindent \section*{Acknowledgments}Manuel Lainz wishes to thank MICINN and ICMAT for a FPI-Severo Ochoa predoctoral contract PRE2018-083203

The authors are  supported  by Ministerio de Ciencia e Innovaci\'on ( Spain) under grants  MTM2016-76702-P and ``Severo Ochoa Programme for Centres of Excellence'' in R\&D (SEV-2015-0554). A.Simoes is supported by the FCT (Portugal) research fellowship SFRH/BD/129882/2017. M.~Lainz wishes to thank MICINN and ICMAT for a FPI-Severo Ochoa predoctoral contract PRE2018-083203.

\bibliography{Thermo_Discrete}{}
\bibliographystyle{apalike}

\end{document}